\newtheorem{theorem}{\textit{Theorem}}
\newtheorem{corollary}{\textit{Corollary}}
\DeclareMathOperator\erf{erf}
\newcolumntype{R}[1]{>{\raggedleft\arraybackslash }b{#1}}
\newcolumntype{L}[1]{>{\raggedright\arraybackslash }b{#1}}
\newcolumntype{C}[1]{>{\centering\arraybackslash }b{#1}} 
\begin{document} 
\title{Measurements-Based Channel Models for Indoor LiFi Systems}
\author{Mohamed Amine Arfaoui$^{*}$,
        Mohammad Dehghani Soltani,
        Iman Tavakkolnia,
        Ali Ghrayeb, \\
        Chadi Assi, 
        Majid Safari, 
        and Harald Haas \vspace{-0.7cm}
\thanks{M. A. Arfaoui and C. Assi are with Concordia Institute for Information Systems Engineering (CIISE), Concordia University, Montreal, Canada, e-mail:\{m\_arfaou@encs, assi@ciise\}.concordia.ca.}
\thanks{M. D. Soltani, I. Tavakkolnia, M. Safari, and H. Haas are with the LiFi Research and Development Centre, Institute for Digital Communications, School of Engineering, University of Edinburgh, UK. e-mail: \{m.dehghani, i.tavakkolnia, majid.safari, h.haas\}@ed.ac.uk.}
\thanks{A. Ghrayeb is with the Electrical and Computer Engineering (ECE) department, Texas A$\&$M University at Qatar, Doha, Qatar, e-mail: ali.ghrayeb@qatar.tamu.edu.}
\thanks{$^*$\textit{Corresponding author: M.A. Arfaoui, m$\_$arfaou@encs.concordia.ca}}
}
\maketitle 
\thispagestyle{plain}
%-------------------------------------------------
 \begin{abstract} Light-fidelity (LiFi) is a fully-networked bidirectional optical wireless communication (OWC) that is considered a promising solution for high-speed indoor connectivity. Unlike in conventional radio frequency wireless systems, the OWC channel is not isotropic, meaning that the device orientation affects the channel gain significantly. However, due to the lack of proper channel models for LiFi systems, many studies have assumed that the receiver is vertically upward and randomly located within the coverage area, which is not a realistic assumption from a practical point of view. In this paper, novel realistic and measurement-based channel models for indoor LiFi systems are proposed. Precisely, the statistics of the channel gain are derived for the case of randomly oriented stationary and mobile LiFi receivers. For stationary users, two channel models are proposed, namely, the modified truncated Laplace (MTL) model and the modified Beta (MB) model. For LiFi users, two channel models are proposed, namely, the sum of modified truncated Gaussian (SMTG) model and the sum of modified Beta (SMB) model. Based on the derived models, the impact of random orientation and spatial distribution of LiFi users is investigated, where we show that the aforementioned factors can strongly affect the channel gain and system performance.
\end{abstract} 
% - - - - - - - - - - - - - - - - - - - - - - - - - - - - 
\begin{IEEEkeywords}
Channel statistics, indoor channel models, light-fidelity (LiFi), Optical wireless communications, random waypoint, receiver orientation, receiver mobility. 
\end{IEEEkeywords}
\IEEEpeerreviewmaketitle 
\section{Introduction}
\subsection{Motivation}
\indent The total data traffic is expected to become about 49 exabytes per month by 2021, while in 2016, it was approximately 7.24 exabytes per month \cite{Intro1}. With this drastic increase, the fifth generation (5G) networks and beyond must urgently provide high data rates, seamless connectivity, robust security and ultra-low latency communications \cite{Intro2,Intro3,Intro4}. In addition, with the emergence of the internet-of-things (IoT) networks, the number of connected devices to the internet is increasing dramatically \cite{Intro5,Intro6}. This fact implies not only a significant increase in data traffic, but also the emergence of some IoT services with crucial requirements. Such requirements include high data rates, high connection density, ultra reliable low latency communication (URLLC) and security. However, traditional radio-frequency (RF) networks, which are already crowded, are unable to satisfy these high demands \cite{Intro7}. Network densification \cite{Intro8,Intro9} has been proposed as a solution to increase the capacity and coverage of 5G networks. However, with the continuous dramatic growth in data traffic, researchers from both industry and academia are trying to explore new network architectures, new transmission techniques and new spectra to meet these demands. %These include millimeter wave (mmWave), massive multiple-input multiple-output (MIMO), free space optical (FSO) communication and Light-Fidelity (LiFi) to support the data traffic growth and the  high-speed of the next-generation wireless communication systems. 
\\
\indent Light-fidelity (LiFi) is a novel bidirectional, high speed and fully networked wireless communication technology, that uses visible light as the propagation medium in the downlink for the purposes of illumination and communication. It can use infrared in the uplink so that the illumination constraint of a room remains unaffected, and also to avoid interference with the visible light in the downlink \cite{haas2015lifi}. LiFi offers a number of important benefits that have made it favorable for future technologies. These include the very large, unregulated bandwidth available in the visible light spectrum (more than 2600 times greater than the whole RF spectrum), high energy efficiency \cite{tavakkolnia2018energy}, the straightforward deployment that uses off-the-shelf light emitting diode (LED) and photodiode (PD) devices at the transmitter and receiver ends, respectively, and enhanced security as light does not penetrate through opaque objects \cite{VLC8}. However, one of the key shortcomings of the current research literature on LiFi is the lack of appropriate statistical channel models for system design and handover management purposes. 
\subsection{Literature Review}
\indent Some statistical channel models for stationary and uniformly distributed users were proposed in \cite{yin2016performance,gupta2017cascaded,yapici2018non}, where a fixed incidence angle was assumed in \cite{yin2016performance,gupta2017cascaded} and a random incidence angle was assumed in \cite{yapici2018non}. However, accounting for mobility, which is an inherent feature of wireless networks, requires a more realistic and non-uniform model for users' spatial distribution. Several mobility models, such as the random waypoint (RWP) model, have been proposed in the literature to characterize the spatial distribution of mobile users for indoor RF systems \cite{govindan2011probability,aalo2016effect}. However, these studies were limited to RF spectrum where statistical fading channel models were used. Recently, \cite{gupta2018statistics,arfaoui2019SNR} employed the RWP mobility model to characterize the signal-to-noise ratio (SNR) for indoor LiFi systems. In \cite{gupta2018statistics}, the device orientation was assumed constant over time, which is not a realistic scenario, whereas in \cite{arfaoui2019SNR}, the incidence angle of optical signals was assumed to be uniformly distributed, which is not a proper model for the incidence angle, since it does not account for the actual statistics of device orientation.  \\
\indent Device orientation can significantly affect the users’ throughput. The majority of studies on OWC assume that the device always faces vertically upward. This assumption may have been driven by the lack of having a proper model for orientation, and/or to make the analysis tractable. Such an assumption is only accurate for a limited number of devices (e.g., laptops with a LiFi dongle), while the majority of users use devices such as smartphones, and in real-life scenarios, users tend to hold their device in a way that feels most comfortable. Such orientation can affect the users’ throughput remarkably and it should be analyzed carefully. Even though a number of studies have considered the impact of random orientation in their analysis \cite{APselectionLiFi,MDSHandover,ArdimasLiFiRF,wang2017impact,wang2017improvement,wang2011performance,matrawy2016optimum,eroglu2017impact}, all these studies assume a predefined model for the random orientation of the receiver. However, little or no evidence is presented to justify the assumed models. Nevertheless, none of these studies have considered the actual statistics of device orientation and have mainly assumed uniform or Gaussian distribution with hypothetical moments for device orientation. Recently, and for the first time, experimental measurements were carried out to model the polar and azimuth angles of the user's device in \cite{soltani2018modeling,purwita2018WCNC,Zhihong_VTCfall_2018,MDS2019Thesis}. It is shown that the polar angle can be modeled by either a truncated Laplace distribution for the case of stationary users or a truncated Gaussian distribution for the case of mobile users, while the azimuth angle follows a uniform distribution for both cases. Motivated by these results, the impact of the random receiver orientation on the SNR and the bit error rate (BER) was studied for indoor stationary LiFi users in \cite{soltani2018impact}. \\ 
\indent Solutions to alleviate the impact of device random orientation on the received SNR and throughput were proposed in \cite{mohammad2018optical,tavakkolnia2019mimo,ChengICCW19}. In \cite{mohammad2018optical}, the impact of the random receiver orientation, user mobility and blockage on the SNR and the BER was studied for indoor mobile LiFi users. Then, simulations of BER performance for spatial modulation using a multi-directional receiver configuration with consideration of random device orientation was evaluated. In \cite{tavakkolnia2019mimo}, other multiple-input multiple-output (MIMO) techniques in the presence of random orientation were studied. The authors in \cite{ChengICCW19}, proposed an omni-directional receiver which is not affected by the device random orientation. It is shown that the omni-directional receiver reduces the SNR fluctuations and improves the user throughput remarkably. All these studies emphasize the significance of incorporating the random spatial distribution of LiFi users along with the random orientation of LiFi devices into the analysis. However, proper statistical channel models for indoor LiFi systems that encompass both the random spatial distribution and the random device orientation of LiFi users were not derived in the literature, which is the focus of this work.
\subsection{Contributions and Outcomes}
\indent Against the above background, we investigate in this paper the channel statistics of indoor LiFi systems. Novel realistic and measurement-based channel models for indoor LiFi systems are proposed, and the proposed models encompass the random motion and the random device orientation of LiFi users. Precisely, the statistics of the line-of-sight (LOS) channel gain are derived for stationary and mobile LiFi users with random device orientation, using the measurements-based models of device orientation derived in \cite{soltani2018modeling}. For stationary LiFi users, the model of randomly located user is employed to characterize the spatial distribution of the LiFi user, and the truncated Laplace distribution is used to model the device orientation. For mobile LiFi users, the RWP mobility model is used to characterize the spatial distribution of the user and the truncated Gaussian distribution is used to model the device orientation. In light of the above discussion, we may summarize the paper contributions as follows.
\begin{itemize}
    \item For stationary LiFi users, two channel models are proposed, namely the modified truncated Laplace (MTL) model and the modified Beta (MB) model. For mobile LiFi users, also two channel models are proposed, namely the sum of modified truncated Gaussian (SMTG) model and the sum of modified Beta (SMB) model. The accuracy of the derived models is then validated using the Kolmogorov-Smirnov distance (KSD) criterion. 
    \item The BER performance of LiFi systems is investigated for both cases of stationary and mobile users using the derived statistical channel models. We show that the random orientation and the random spatial distribution of LiFi users could have strong effect on the error performance of LiFi systems. 
    \item We propose a novel design of indoor LiFi systems that can alleviate the effects of random device orientation and random spatial distribution of LiFi users. We show that the proposed design is able to guarantee good error performance for LiFi systems under the realistic behaviour of LiFi users.
    \item The proposed statistical LiFi channel models are of great significance. In fact, any LiFi transceiver design, to be efficient, it needs to incorporate the channel model into the design. Therefore, having realistic channel models will help in designing realistic LiFi transceivers.
\end{itemize}
\subsection*{Outline and Notations}
\indent The rest of the paper is organized as follows. The system model is presented in Section II. Section III presents the exact statistics of the LOS channel gain. In Sections IV, statistical channel models for stationary and mobile LiFi users are proposed. Finally, the paper is concluded in Section V and future research directions are highlighted. \\ 
\indent The notations adopted throughout the paper are summarized in Table \ref{T1}. In addition, for every random variable $X$, $f_X$ and $F_X$ denote the probability density function (PDF) and the cumulative distribution function (CDF) of $X$, respectively. The function $\delta (\cdot)$ denotes the Dirac delta function. The function $\mathcal{U}_{[a,b]}\left( \cdot \right)$ denotes the the unit step function within $[a,b]$, i.e., for all $x \in \mathbb{R}$, $\mathcal{U}_{[a,b]}\left( x \right) = 1$ if $x \in [a,b]$, and 0 otherwise.
\begin{table}[t]
\centering
\caption{Table of Notations}
\renewcommand{\arraystretch}{1.2} 
\setlength{\tabcolsep}{0.2cm} 
\begin{tabular}{| l | l |}
  \hline 
  \multicolumn{2}{|c|}{\textbf{System Geometry}} \\
  \hline 
  \hline
  $R$ & Radius of the LiFi attocell \\ 
  \hline
  $R_{\rm e}$ & Radius of the large area \\ 
  \hline
  $h_{\rm a}$ & Height of the AP \\ 
  \hline
  $h_{\rm u}$ & Height of the LiFi receiver \\ 
  \hline 
  \hline 
  \multicolumn{2}{|c|}{\textbf{LiFi Channel Parameters}} \\
  \hline 
  \hline 
  $H$ & LOS channel gain \\ 
  \hline 
  $r$ & Polar distance of the LiFi user \\ 
  \hline 
  $\alpha$ & Polar angle of the LiFi user \\ 
  \hline 
  $d$ & Distance from the AP to the LiFi user \\
  \hline 
  $\Omega$ & Angle of the direction facing the LiFi user \\ 
  \hline 
  $\theta$ & Elevation angle of the LiFi receiver \\ 
  \hline
  $\Psi$ & Angle of incidence \\ 
  \hline
  $\Psi_c$ & Field of view \\
  \hline 
\end{tabular} 
\label{T1}
\end{table}
\section{System Model} 
\begin{figure}[t]
\centering     
\includegraphics[width=0.65\linewidth]{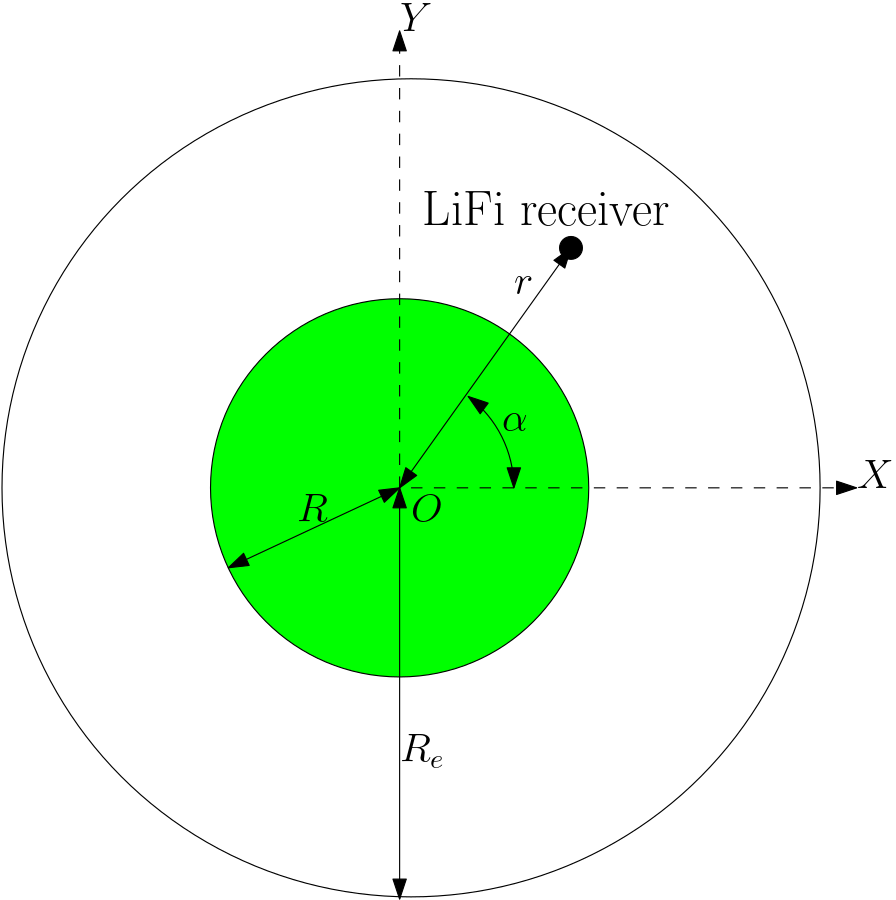}
\caption{Top view of a LiFi attocell which is concentric with a larger circular area.}
\label{fig:celldesign}
\end{figure}
Consider the indoor LiFi cellular system shown in Fig.~\ref{fig:celldesign}, which consists of a LiFi attocell with radius $R$ (green attocell), that is equipped with a single access-point (AP) installed at height $h_{\rm a}$ from the ground. The LiFi attocell is concentric with a larger circular area with a radius $R_{\rm e}$ ($R \leq R_{\rm e}$), within which a LiFi user may be located. The user equipment (UE) is equipped with a single PD that is used for communication with the AP. Assuming that the global coordinate system $\left(O,X,Y,Z \right)$ is cylindrical, the coordinates of the UE are given by $\left(r,\alpha,h_{\rm u} \right)$, where $r \in [0,R_{\rm e}]$ is the polar distance, $\alpha \in [0,2 \pi]$ is the polar angle and $h_{\rm u} \in \left[0, h_{\rm a}\right]$ is the height of the LiFi receiver. The user is assumed to hold the UE within a close distance of the body. Therefore, the polar coordinates $(r,\alpha)$ of the UE are assumed exactly the same as those of the LiFi user. However, this is not the case for the height $h_{\rm u}$, since it depends mainly on the activity of the LiFi user, i.e., either stationary (sitting activity) or mobile (walking activity). Furthermore, in this communication model, the UE can be connected to the AP if it is located inside the LiFi attocell, i.e., when $r \leq R$. In this case, the received signal at the LiFi receiver at each channel use is expressed as 
\begin{equation}
Y = HS+N, 
\end{equation}
where $H$ is the downlink channel gain, $S$ is the transmitted signal and $N$ is an additive white Gaussian noise (AWGN) that is $\mathcal{N}(0,\sigma^2)$ distributed. Since LiFi signals should be positive valued and satisfy a certain peak-power constraint \cite{arfaoui2016secrecy}, we assume that $0 \leq S \leq A$, where $A \in \mathbb{R}_+$ denotes the maximum allowed signal amplitude. \\ 
\indent The channel gain $H$ is the sum of a LOS component and a non-light-of-sight (NLOS) component resulting from reflections of walls. However, it was observed in \cite{zeng2009high} that, for indoor LiFi scenarios, the  optical  power  received  from reflected signals is negligible compared to the LOS component, especially if the LiFi receiver is far away from the walls or is located  close to the cell center. In this case, the contribution of the NLOS component is very small compared to that of the LOS component. Based on this, only the LOS component of $H$ is considered, the channel gain $H$ is expressed as \cite{zeng2009high}
\begin{equation}
\label{ChannelGain}
H = H_0  \frac{\cos(\phi)^m\cos(\psi)}{d^2} \text{rect} \left( \frac{\psi}{\Psi_c} \right), 
\end{equation}
where, as shown in Fig.~\ref{fig:systmodel1}, $m$ is the order of the Lambertian emission that is given by $m = \frac{-\log (2)}{\log(\cos(\phi_{1/2}))}$, such that $\phi_{1/2}$ represents the semi-angle of a LED; $d = \sqrt{r^2 + \left(h_{\rm a} - h_{\rm u} \right)^2}$ is the distance between the AP and the UE; $\phi \in [0, \phi_{1/2}]$ is the radiation angle; $\psi \in [0, \pi]$ is the incidence angle and $\Psi_c$ is the field of view of the PD. In \eqref{ChannelGain}, $H_0$ is
\begin{equation}
    H_0 = \rho R_p \frac{(m+1)}{2 \pi} \frac{n_c^2 A_{g}}{\sin(\Psi_c)^2},
\end{equation}
where $\rho$ is the electrical-to-optical conversion factor, $R_p$ is the PD responsivity, $A_{g}$ is the geometric area of the PD and $n_c$ is the refractive index of the PD's optical concentrator. \\ 
\begin{figure}[t]
\centering     
\includegraphics[width=0.75\linewidth]{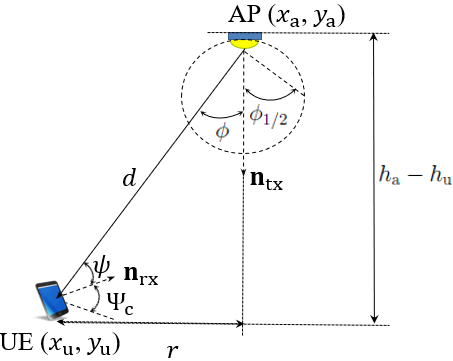}
\caption{Description of the indoor LiFi communication link.}
\label{fig:systmodel1}
\end{figure}
\begin{figure}[t]
\centering     
\includegraphics[width=0.75\linewidth]{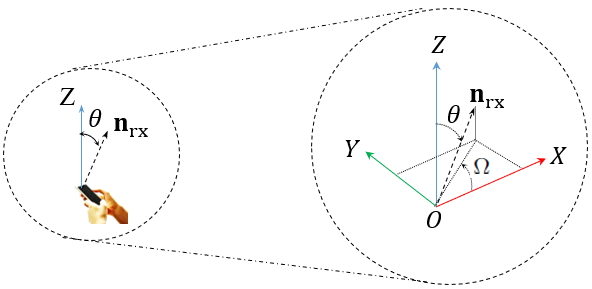}
\caption{Orientation angles of the LiFi receiver.}
\label{fig:systmodel2}
\end{figure}
\indent Based on the results of \cite{soltani2018modeling}, $\cos(\phi)$ and $\cos(\psi)$ are expressed, respectively, as
\begin{subequations}
\begin{align}
\cos(\phi) &= \frac{h_{\rm a} - h_{\rm u}}{d}, \\ 
\cos(\psi) &= \frac{\left(z_{\rm a}-z_{\rm u}\right)}{d} \cos(\theta) - \frac{\left(x_{\rm a}-x_{\rm u}\right)}{d} \cos(\Omega) \sin(\theta) \\ 
&- \frac{\left(y_{\rm a}-y_{\rm u}\right)}{d}\sin(\Omega) \sin(\theta) \nonumber,
\end{align}
\end{subequations}
where $\left(x_{\rm a},y_{\rm a},z_{\rm a}\right)$ and $\left(x_{\rm u},y_{\rm u},z_{\rm u}\right)$ are the Cartesian coordinates of the AP and the UE, respectively, and as shown in Fig.~\ref{fig:systmodel2}, $\Omega$ and $\theta$ are the angle of direction and the elevation angle of the UE, respectively. The angle of direction $\Omega$ represents the angle between the direction the user is facing and the $X$-axis, whereas the elevation angle $\theta$ is the angle between the normal vector of PD $n_{\rm rx}$ and the $Z$-axis. Based on Fig.~\ref{fig:systmodel1}, we have $\left(x_{\rm a},y_{\rm a},z_{\rm a}\right) = \left(0,0,h_{\rm a}\right)$ and $\left(x_{\rm u},y_{\rm u},z_{\rm u}\right) = \left(r\cos(\alpha),r\sin(\alpha),h_{\rm u}\right)$. Therefore, $\cos(\psi)$ can be expressed as
\begin{equation}
\cos(\psi) = \frac{r\cos(\Omega-\alpha) \sin(\theta) + \left(h_{\rm a}-h_{\rm u}\right)\cos(\theta)}{d}.
\end{equation}
Consequently, the LOS channel gain $H$ is expressed as 
    \begin{equation}
    \label{EqChannel}
    H = \left(\frac{a(\theta) r}{d^{m+3}} \cos (\Omega-\alpha) + \frac{b(\theta)}{d^{m+3}}\right) \times \mathbb{1}\left(\cos(\psi)>\cos(\Psi_c) \right), 
\end{equation}
where $a(\theta) = H_0 (h_a-h_u)^m \sin(\theta)$ and $b(\theta) = H_0 (h_a-h_u)^{m+1} \cos(\theta)$. \\
\indent Based on the above, we conclude that the random behaviour of the channel gain $H$ depends mainly on four random variables, which are $r$, $\alpha$, $\Omega$ and $\theta$. Precisely, the variables $r$ and $\alpha$ model the randomness of the instantaneous location of the LiFi receiver whereas the variables $\Omega$ and $\theta$ model the randomness of the instantaneous UE orientation. Additionally, the statistics of the polar distance $r$ and the the elevation angle $\theta$ depend on the motion of the LiFi user, either stationary or mobile. Consequently, the statistics of the LOS channel gain $H$ inducibly depend on the LiFi user activity. In the following section, the exact statistics of the channel gain $H$ are derived for the case of stationary and mobile LiFi users.
\section{Channel Statistics of Stationary and Mobile Users with Random Device orientation}
The objective of this section is deriving the exact statistics of the LOS channel gain $H$ for the case of stationary and mobile LiFi users. In subsection \ref{suc1subsecA}, we present the statistics of the four main factors $r$, $\alpha$, $\Omega$ and $\theta$ for each case, from which we derive in subsection \ref{suc1subsecB} the exact statistics of $H$. 
\subsection{Parameters Statistics}
\label{suc1subsecA}
From a statistical point of view, the instantaneous location and the instantaneous orientation of the LiFi receiver are independent. Thus, the couples of random variables $\left(r,\alpha\right)$ and $\left(\Omega,\theta \right)$ are independent. In addition, based on the results of \cite{arfaoui2018secrecy,gupta2018statistics}, the random variables $r$ and $\alpha$ are independent, since $r$ defines the polar distance and $\alpha$ defines the polar angle. On the other hand, based on the results of \cite{soltani2018modeling}, the angle of direction $\Omega$ and the elevation angle $\theta$ are also statistically independent. Therefore, the random variables $r$, $\alpha$, $\Omega$ and $\theta$ are independent. In addition, for both cases of stationary and mobile LiFi users, the random variables $\alpha$ and $\Omega$ are uniformly distributed within $[0,2\pi]$ \cite{arfaoui2018secrecy,gupta2018statistics,soltani2018modeling}. However, this is not the case for the polar distance $r$ and the elevation angle $\theta$. In fact, as we will show in the following, the statistics of $r$ and $\theta$ depend on whether the LiFi receiver is stationary or mobile.
\paragraph*{\textbf{1) Stationary Users}} \quad \\
\indent When the LiFi user is stationary, its location is fixed. However, the LiFi user is randomly located, i.e., its instantaneous location is uniformly distributed within the circular area of radius $R_{\rm e}$. In this case, the PDF of the polar distance $r$ is expressed  $f_r(r) = \frac{2r}{R_{\rm e}^2}\mathcal{U}_{\left[ 0,R_{\rm e} \right]}(r)$ \cite{arfaoui2018secrecy}. Additionally, the authors in \cite{soltani2018modeling} presented a measurement-based study for the UE orientation, where they derived statistical models for the elevation angle $\theta$. In this study, they show that, for stationary users, the elevation angle $\theta$ follows a truncated Laplace distribution, where its PDF is expressed as 
\begin{equation}
    f_{\theta} (\theta) = \frac{\exp \left( -\frac{|\theta-\mu_{\theta}|}{ \sigma_{\theta}/\sqrt{2}} \right) \mathcal{U}_{\left[ 0,\pi/2 \right]}(\theta) }{\sqrt{2\sigma_{\theta}}\left(1 - \exp \left( -\frac{\left(\frac{\pi}{2}-\mu_{\theta}\right)}{\sigma_{\theta}/\sqrt{2}} \right) - \exp \left( -\frac{\mu_{\theta}}{\sigma_{\theta}/\sqrt{2}} \right)\right)},
\end{equation}
such that $\mu_{\theta} = 41.39^\circ$ and $\sigma_{\theta} = 7.68^\circ$.
\paragraph*{\textbf{2) Mobile Users}} \quad \\
\indent For mobile users, and especially in indoor environments, the UE motion represents the user's walk, which is equivalent to a 2-D topology of the RWP mobility model, where the direction, velocity and destination points (waypoints) are all selected randomly. Based on \cite{govindan2011probability,aalo2016effect}, the spatial distribution of the LiFi receiver is polynomial in terms of the polar distance $r$ and its PDF is expressed as $f_r(r) = \sum_{i=1}^3 a_i \frac{r^{b_i}}{R_{\rm e}^{b_i+1}}\mathcal{U}_{\left[ 0,R_{\rm e} \right]}(r)$, where $[a_1,a_2,a_3] = \frac{1}{75}[324, -420, 96]$ and $[b_1, b_2, b_3] = [1,3,5]$. Moreover, it was shown in the same measurement-based study in \cite{soltani2018modeling} that, for mobile users, the elevation angle $\theta$ follows a truncated Gaussian distribution, where its PDF is expressed as 
\begin{equation}
    f_{\theta} (\theta) = \frac{2\exp \left( -\frac{(\theta-\mu_{\theta})^2}{ 2\sigma_{\theta}^2} \right)\mathcal{U}_{\left[ 0,\pi/2 \right]}(\theta)}{\sqrt{2 \pi \sigma_{\theta}^2} \left(\erf \left(\frac{\frac{\pi}{2}-\mu_{\theta}}{\sqrt{2}\sigma_{\theta}} \right) + \erf \left(\frac{\mu_{\theta}}{\sqrt{2}\sigma_{\theta}} \right) \right)},
\end{equation}
such that $\mu_{\theta} = 29.67^\circ$ and  $\sigma_{\theta} = 7.78^\circ$.
\subsection{Channel Statistics}
\label{suc1subsecB}
As stated in Section II, the LiFi receiver can be located anywhere inside the outer cell with radius $R_{\rm e}$. However, it is connected to the desired AP if it is located inside the LiFi attocell, i.e., if $r \in [0,R]$. In other words, in order to have a communication link between the desired AP and the LiFi receiver, the only admitted values of the polar distance $r$ should be within the range $[0,R]$. Due to this, we constrain the range of $r$ to be $[0,R]$, and therefore, the exact PDF of the polar distance $r$ becomes $
    \tilde{f}_r(r) = \frac{f_r(r)}{F_r(R) - F_r(0)}\mathcal{U}_{\left[ 0,R \right]}(r)$,
where $F_r$ denotes the CDF of $r$. Consequently, the PDF of the distance $d = \sqrt{r^2 + (h_a-h_u)^2}$ is given by
\begin{equation}
    f_d (d) = \frac{d \times \tilde{f}_r \left( \sqrt{d^2-(h_a-h_u)^2} \right)}{\sqrt{d^2-(h_a-h_u)^2}}  \mathcal{U}_{\left[d_{\min}, d_{\max} \right]}(d), 
\end{equation}
where $d_{\min} = h_a-h_u$ and $d_{\max} = \sqrt{R^2+(h_a-h_u)^2}$. On the other hand, consider the random variable $\cos \left(\Omega-\alpha \right)$  appearing in \eqref{EqChannel}. Since $\Omega$ and $\alpha$ are independent and uniformly distribution within $[0,2\pi]$ and using the PDF transformation of random variables, $\cos \left(\Omega-\alpha \right)$ follows the arcsine distribution within the range $[-1,1]$. Thus, the PDF and CDF of $\cos \left(\Omega-\alpha \right)$ are expressed, respectively, as
\begin{equation}
f_{\cos (\Omega-\alpha)}(x) = \frac{1}{\pi \sqrt{1-x^2}} \mathcal{U}_{\left[-1,1 \right]}(x) ,
\end{equation}
\begin{equation}
F_{\cos (\Omega-\alpha)}(x) = \left(\frac{\arcsin (x)}{\pi} + \frac{1}{2}\right)\mathcal{U}_{\left[-1,1 \right]}(x) + \mathcal{U}_{\left[1,+\infty \right]}(x).
\end{equation}
Based on this, the exact PDF of the channel gain $H$ is given in the following theorem.
\begin{theorem}
The range of the LOS channel gain $H$ is $[h_{\min},h_{\max}]$, where $h_{\min}=0$ and $h_{\max}= \frac{H_0}{\left(h_a-h_u \right)^{2}}$. In addition, for $h \in [h_{\min}^{},h_{\max}]$, the PDF of $H$ is expressed as
\begin{equation}
    f_{H} (h) = g_{H}(h)\mathcal{U}_{\left[h_{\min}^{*},h_{\max} \right]}(h) + F_{\cos(\psi)}(\cos(\Psi_c)) \delta(h),
\end{equation}
where $h_{\min}^{*} = \frac{H_0(h_a-h_u)^{m}\cos(\Psi_c)}{d_{\max}^{m+2}}$, $F_{\cos(\psi)}(\cos(\Psi_c))$ is given in (13) on top of this page,
\begin{figure*}[t] 
\begin{equation}
F_{\cos(\psi)}(\cos(\Psi_c)) =  \int_{d_{\min}}^{d_{\max}} \int_{0}^{\frac{\pi}{2}} F_{\cos \left(\Omega-\alpha \right)} \left(\frac{d \cos(\Psi_c)-(h_a-h_u) \cos \theta}{\sin (\theta) \sqrt{{d}^2-(h_a-h_u)^2}} \right) f_{\theta}(\theta) \mathrm{d}\theta f_{d}(d) \mathrm{d}d 
\end{equation}
\noindent\makebox[\linewidth]{\rule{\textwidth}{0.4pt}}
\end{figure*} 
in which $d_{\min}^*(h) = \max \left(d_{0}(h),d_{\min} \right)$ such that
\begin{equation}
    d_{0}(h) = \left(\frac{h_0(h_a-h_u)^{m}\cos(\Psi_c)}{h}\right)^{\frac{1}{m+2}}.
\end{equation}
and the function $g_{H}$ is expressed as shown in (15) on top of this page,
\begin{figure*}[t] 
\begin{equation}
\begin{aligned}
g_{H} (h) &=  \int_{d_{\min}^*(h)}^{d_{\max}} \int_{0}^{\frac{\pi}{2}} \frac{d^{m+3}}{a(\theta)\sqrt{d^2-(h_a-h_u)^2}} f_{\cos (\Omega-\alpha)} \left( \frac{d^{m+3}h-b(\theta)}{a(\theta)\sqrt{d^2-(h_a-h_u)^2}} \right) f_{\theta}(\theta) f_{d}(d) \mathrm{d}\theta \mathrm{d}d \\ 
&+v(h) \int_{0}^{\frac{\pi}{2}} J_{H} \left(\theta,d \right) f_{\theta}(\theta) f_{d}(d) \mathrm{d}\theta \mathrm{d}d,
\end{aligned}
\end{equation}
\noindent\makebox[\linewidth]{\rule{\textwidth}{0.4pt}}
\end{figure*} 
in which the function $v$ is expressed as $v(h) = \frac{-\left( h_0(h_a-h_u)^{m}\cos(\Psi_c) \right)^{\frac{1}{m+2}}}{(m+2)h^{\frac{m+3}{m+2}}} \mathcal{U}_{\left[ h_{\min}^{*},h_{\max}^{*} \right]}$, such that $h_{\max}^{*} = \frac{H_0 (h_a-h_u)^m\cos(\Psi_c)}{d_{\min}^{m+2}}$, and the function $J_{H}$ is expressed as
\begin{equation}
    \begin{split}
        J_{H} \left(\theta,d \right) &= F_{\cos (\Omega-\alpha)} \left(\frac{d_{\min}^* \cos(\Psi_c)-(h_a-h_u) \cos (\theta)}{\sin(\theta)\sqrt{{d_{\min}^*}^2-(h_a-h_u)^2}} \right)\\ 
        &\quad - F_{\cos (\Omega-\alpha)} \left( \frac{{d_{\min}^*}^{m+3}h-b(\theta)}{a(\theta)\sqrt{{d_{\min}^*}^2-(h_a-h_u)^2}} \right).
    \end{split}
\end{equation}
\end{theorem}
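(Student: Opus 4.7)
The plan is to condition on the pair $(\theta,d)$, exploit their independence from $(\Omega,\alpha)$, and integrate out $\cos(\Omega-\alpha)$ using its known arcsine distribution. I would start by pinning down the range: the indicator in the channel gain expression forces $H=0$ whenever $\cos(\psi)\le\cos(\Psi_c)$, giving $h_{\min}=0$, while on the complementary event $H=H_0\cos(\phi)^m\cos(\psi)/d^2$ is maximized at $r=0$ and $\theta=0$ (where $\cos(\phi)=\cos(\psi)=1$ and $d=h_a-h_u$), yielding $h_{\max}=H_0/(h_a-h_u)^2$.

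Next I would isolate the atom at $h=0$. Substituting $r=\sqrt{d^2-(h_a-h_u)^2}$ into the formula for $\cos(\psi)$ and rearranging, the event $E=\{\cos(\psi)>\cos(\Psi_c)\}$ becomes $\{\cos(\Omega-\alpha)>\kappa(\theta,d)\}$, where $\kappa(\theta,d)=(d\cos(\Psi_c)-(h_a-h_u)\cos\theta)/(\sin(\theta)\sqrt{d^2-(h_a-h_u)^2})$. Since $H=0$ precisely when $E$ fails, conditioning on $(\theta,d)$ and integrating $F_{\cos(\Omega-\alpha)}(\kappa(\theta,d))$ against $f_\theta f_d$ reproduces the stated formula for $F_{\cos(\psi)}(\cos(\Psi_c))$; the unit-step factors inside the arcsine CDF handle the $\pm 1$ truncations automatically.

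For the continuous part, observe that on $E$ the channel gain equals the affine function $\tilde{H}=(a(\theta)r\cos(\Omega-\alpha)+b(\theta))/d^{m+3}$, which is strictly increasing in $\cos(\Omega-\alpha)$ since $a(\theta)\ge 0$. Hence $\{\tilde{H}\le h\}$ corresponds to $\{\cos(\Omega-\alpha)\le\nu(\theta,d,h)\}$ with $\nu(\theta,d,h)=(d^{m+3}h-b(\theta))/(a(\theta)\sqrt{d^2-(h_a-h_u)^2})$, and the joint event $\{0<H\le h\}$ conditional on $(\theta,d)$ has probability $F_{\cos(\Omega-\alpha)}(\nu)-F_{\cos(\Omega-\alpha)}(\kappa)$ whenever $\nu\ge\kappa$ and $0$ otherwise. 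A brief algebraic cancellation, exploiting $a(\theta)/\sin\theta=H_0(h_a-h_u)^m$ and $b(\theta)=H_0(h_a-h_u)^{m+1}\cos\theta$, collapses $\nu\ge\kappa$ to the $\theta$-free condition $d\ge d_0(h)$; intersecting with $d\in[d_{\min},d_{\max}]$ then yields the active lower limit $d^*_{\min}(h)=\max(d_0(h),d_{\min})$.

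Finally I would differentiate the resulting CDF $F_H(h)-P(E^c)=\int_{d^*_{\min}(h)}^{d_{\max}}\int_0^{\pi/2}[F_{\cos(\Omega-\alpha)}(\nu)-F_{\cos(\Omega-\alpha)}(\kappa)]f_\theta(\theta)f_d(d)\,d\theta\,dd$ in $h$ via Leibniz's rule. The interior term contributes $f_{\cos(\Omega-\alpha)}(\nu)\cdot\partial\nu/\partial h$, which is precisely the first summand of $g_H$; the boundary contribution at $d=d^*_{\min}(h)$ equals $-d_0'(h)$ times the inner $\theta$-integral, and after folding the minus sign into the definition $J_H=F_{\cos(\Omega-\alpha)}(\kappa)-F_{\cos(\Omega-\alpha)}(\nu)$ and identifying $v(h)=d_0'(h)$ on $[h^*_{\min},h^*_{\max}]$ (with $v\equiv 0$ elsewhere, since the lower limit is then pinned at $d_{\min}$), this reproduces the second summand $v(h)\int_0^{\pi/2}J_H(\theta,d^*_{\min})f_\theta f_d\,d\theta$. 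The main technical obstacle will be the bookkeeping in the third step: verifying that the reduction $\nu\ge\kappa\Leftrightarrow d\ge d_0(h)$ is genuinely $\theta$-independent, and that the switch of $d^*_{\min}$ across $d_{\min}$ (equivalently, of $h$ across $h^*_{\max}$) is captured consistently by the indicator inside $v(h)$.
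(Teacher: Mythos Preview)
Your proposal is correct and follows essentially the same route as the paper's proof in Appendix~A: condition on $(\theta,d)$, split the CDF into the FOV-failure atom $F_{\cos(\psi)}(\cos(\Psi_c))$ and the continuous part $\int_{d_{\min}^*(h)}^{d_{\max}}\int_0^{\pi/2}I_H(\theta,d)\,f_\theta f_d\,\mathrm{d}\theta\,\mathrm{d}d$, reduce the non-emptiness condition to the $\theta$-free constraint $d\ge d_0(h)$, and then differentiate via Leibniz's rule to obtain the interior and boundary contributions of $g_H$. If anything, your write-up is slightly more explicit than the paper's about why $\nu\ge\kappa$ collapses to a condition independent of $\theta$ and about the identification $v(h)=d_0'(h)$ on $[h_{\min}^*,h_{\max}^*]$.
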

\begin{proof}
See Appendix A.
\end{proof}
\noindent The exact CDF of the LOS channel gain $H$ is also provided in (40) in Appendix A. On the other hand, note that the function $h \mapsto F_{\cos(\psi)}(\cos(\Psi_c)) \delta(h)$ expresses the effect of the field of view $\Psi_c$ on the LOS channel gain $H$. \\ 
\indent As it can be seen in Theorem 1, the closed-form expression of the exact PDF of the LOS channel gain $H$ in (12) is neither straightforward nor tractable, since it involves some complex and atypical integrals. Due to this, in order to provide simple and tractable channel models for indoor LiFi systems, we propose in the following section some approximations for the PDF of $H$ in (12), for the cases of stationary and mobile LiFi users.
\section{Approximate PDFs of LiFi LOS channel Gain}
In this section, our objective is to derive some approximations for the PDF of $H$, starting from the results of Theorem 1. The cases of stationary and mobile LiFi users are investigated separately in subsections \ref{sub4A} and \ref{sub4B}, respectively.
\subsection{Stationary Users}
\label{sub4A}
An approximate expression of the PDF of the LOS channel gain $H$ for the case of stationary LiFi users is given in the following theorem. 
\begin{theorem}
For the case of stationary users, an approximate expression of the PDF of the channel gain $H$ is given by 
\begin{equation}
    f_{H} (h) \approx \frac{1}{h^\nu} g (h) + F_{\cos(\psi)}(\cos(\Psi_c)) \delta(h),
\end{equation}
where $\nu >0$ and $g$ is a function with range $[h_{\min}^{*},h_{\max}]$. % which is given in (66).
\end{theorem}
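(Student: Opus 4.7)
My plan is to extract a power-law factor $h^{-\nu}$ from the continuous part of the exact PDF in Theorem 1 while preserving the Dirac-delta FOV-blockage term $F_{\cos(\psi)}(\cos(\Psi_c))\delta(h)$ verbatim. This reduces the task to approximating the function $g_H(h)$ in (15) on the interval $[h_{\min}^{*}, h_{\max}]$ by a single power law times a well-behaved function $g(h)$.

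The correct exponent can be read off directly from the second term of (15): the prefactor $v(h)$ contains $h^{-(m+3)/(m+2)}$, which suggests $\nu = (m+3)/(m+2)$. To confirm that the same exponent governs the first term of (15), I would perform the substitution $d = x\, h^{-1/(m+2)}$. Under this change of variable, $d^{m+3}$ produces $x^{m+3} h^{-(m+3)/(m+2)}$; the quantity $d^{m+3} h$ inside the arcsine density becomes $x^{m+3} h^{-1/(m+2)}$; and the $h$-dependent lower limit $d_{\min}^{*}(h) = d_0(h) = c\, h^{-1/(m+2)}$, with $c = (H_0 (h_a-h_u)^{m} \cos(\Psi_c))^{1/(m+2)}$, maps to the $h$-independent value $x = c$. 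The dominant $h$-dependence of the integrand is therefore indeed $h^{-(m+3)/(m+2)}$, and the residual $h$-dependence (through $\sqrt{d^2-(h_a-h_u)^2}$, through the upper limit $x = d_{\max}\, h^{1/(m+2)}$, and through the shifted argument of the arcsine density) can be collected into $g(h)$.

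To make $g(h)$ compact, I would exploit the fact that for stationary users the truncated Laplace distribution of $\theta$ is sharply concentrated near $\mu_\theta = 41.39^{\circ}$ (with $\sigma_\theta = 7.68^{\circ}$), so a Laplace-type argument lets me replace $\theta$ by $\mu_\theta$ inside the integrand and reduce the double integral to a single integral in $x$. Combining with the explicit boundary term $v(h)$, this yields $g(h)$ as a sum of two one-dimensional integrals, and its support is verified to be $[h_{\min}^{*}, h_{\max}]$ because $d_0(h) = d_{\max}$ precisely at $h = h_{\min}^{*}$ and $d_0(h) = d_{\min}$ at $h = h_{\max}^{*}$, while the first (non-boundary) contribution naturally extends up to $h_{\max}$.

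The hard part will be controlling the error incurred by freezing $\theta$ at $\mu_\theta$ and by any further simplification of the arcsine factor near its singularities at $\pm 1$; the arcsine density is unbounded at those points and its argument drifts with $h$, so a careless approximation could corrupt the tail behavior of $g$. I would not pursue a rigorous error estimate here. Instead, consistent with the strategy announced in the contributions, the faithfulness of the resulting approximation is to be verified a posteriori through the Kolmogorov--Smirnov distance when $g$ is fitted against the MTL and MB families in the subsequent subsections.
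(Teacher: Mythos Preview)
Your approach is valid for the (rather weak) statement of Theorem~2, but it is genuinely different from the paper's. The paper does not manipulate the double integral in (15) directly. Instead, it rewrites the continuous part of $f_H$ as $(1-F_{\cos(\psi)}(\cos(\Psi_c)))f_Z(h)$ with $Z=XY$, where $X=c/d^{m+2}$ (the position-only factor, $c=H_0(h_a-h_u)^m$) and $Y=\cos(\psi)$ restricted to $[0,\Psi_c]$ (the orientation factor). For stationary users the paper computes $f_X(x)\propto x^{-(m+4)/(m+2)}$ explicitly, then makes the single approximation that $X$ and $Y$ are \emph{uncorrelated} (arguing that for large $d$ orientation is negligible and for small $d$ distance is negligible), and applies the product-of-random-variables formula $f_Z(h)\approx\int f_X(h/y)f_Y(y)\,dy/y$. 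The power $h^{-\nu}$ with $\nu=(m+4)/(m+2)$ then drops out cleanly from $f_X$, and everything else becomes $g$.

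Compared with your route, the paper's decomposition buys a transparent physical interpretation of the two factors and a clean, single approximation hypothesis (independence of $X$ and $Y$), at the cost of hiding the connection to the exact expression (15). Your change-of-variables argument stays closer to (15) but leaves a lot of residual $h$-dependence to be absorbed into $g$ (Jacobian, $f_d$, the square root, the moving upper limit), and your Laplace step on $\theta$ is an extra approximation the paper does not need. Note also that you read off $\nu=(m+3)/(m+2)$ from $v(h)$ alone, whereas the paper obtains $\nu=(m+4)/(m+2)$; since $\nu$ is ultimately refitted by moment matching in (23), this discrepancy is harmless for the downstream MTL/MB models, but it shows that your extraction of the ``dominant'' power is somewhat arbitrary once the Jacobian and $f_d(d)$ are included.
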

\begin{proof}
See Appendix B.
\end{proof}
\noindent The approximation of the PDF of the LOS channel gain $H$ provided in Theorem 2 expresses two main factors, which are the random location and random orientation of the UE. The functions $h \mapsto \frac{1}{h^\nu}$ and $h \mapsto g (h)$ express respectively the effects of the random location of the receiver and the random orientation of the UE on the LOS channel gain $H$. At this point, the missing part is the function $g$ that provides the best approximation for the PDF of the LOS channel gain $f_{H}$. In the following, we provide two approximate expressions for the PDF $g$.
\paragraph*{1) The Modified Truncated Laplace (MTL) Model} \quad \\ 
\indent Since the function $h \mapsto g (h)$ expresses the effect of the random orientation of UE on the LOS channel gain $H$ and motivated by the fact that the elevation angle $\theta$ follows a truncated Laplace distribution as shown in (7), one reasonable choice for $g$ is the Laplace distribution. Consequently, an approximate expression of the PDF of the LOS channel gain $H$ can be given by 
\begin{equation}
\begin{split}
    f_{H} (h) &\approx \frac{h^{-\nu} \exp \left( -\frac{|h-\mu_{H}|}{ b_{H}} \right)}{M_1\left(-\nu, \mu_H, b_H \right)} \mathcal{U}_{\left[ h_{\min}^{*},h_{\max} \right]}(h) \\ 
    &+ F_{\cos(\psi)}(\cos(\Psi_c)) \delta(h),
\end{split}
\end{equation}
where $\mu_{H} \in [h_{\min}^{*},h_{\max}]$, $b_{H}>0$ and $M_1\left(-\nu, \mu_H, b_H \right)$ is a normalization factor given by 
\begin{equation}
M_1\left(-\nu, \mu_H, b_H \right) = \frac{G_1\left(-\nu, \mu_H, b_H \right)}{\left[1-F_{\cos(\psi)}(\cos(\Psi_c)) \right]},
\end{equation}
where $G_1$ is given in (20) on top of this page,
\begin{figure*}[t] 
\begin{equation}
G_1 \left(\gamma,\mu_H, b_H \right) = -b_H^{1+\gamma} e^{\frac{-\mu_H}{b_H}} \left[ \Gamma \left(1 + \gamma, \frac{h_{\max}}{b_H} \right) - \Gamma \left(1 + \gamma, \frac{\mu_{H}}{b_H} \right)  + (-1)^{1-\gamma} \left( \Gamma \left(1 + \gamma, -\frac{\mu_{H}}{b_H} \right) - \Gamma \left(1 + \gamma, -\frac{h_{\min}^{*}}{b_H} \right) \right) \right],
\end{equation}
\noindent\makebox[\linewidth]{\rule{\textwidth}{0.4pt}}
\end{figure*} 
in which $\Gamma$ denotes the upper incomplete Gamma function. At this stage, we need to determine the parameters $\left(\nu,\mu_H,b_H\right)$ of $f_H$. One approach to do this is through moments matching. Using the exact PDF of $H$ in (16), the non-centered moments of the LOS channel gain $H$ are given by 
\begin{equation}
m_i^e = \int_{h_{\min}^{*}}^{h_{\max}} h^{i} g_H(h) \mathrm{d}h + F_{\cos(\psi)}(\cos(\Psi_c)), \quad i \in \mathbb{N},
\end{equation}
whereas by using the approximate PDF of $H$ in (18), the non-centered moments of the LOS channel gain $H$ are given by  
\begin{equation}
m_i^a\left(\nu,\mu_H,b_H\right) =   \frac{M_1\left(i-\nu, \mu_H, b_H \right)}{M_1\left(-\nu, \mu_H, b_H \right)}, \quad i \in \mathbb{N},
\end{equation}
Therefore, since only three parameters need to be determined, which are $\left(\nu,\mu_H,b_H\right)$, they can be obtained by solving the following system of equations
\begin{equation}
    m_i^a\left(\nu,\mu_H,b_H\right) = m_i^e, \quad \text{for} \, \, i=1,2,3.
\end{equation}
\paragraph*{2) The Modified Beta (MB) Model} \quad \\
\label{MBmodel}
\indent The exact PDF of the LOS channel gain $H$ involves the integral of a function that has the form $(x,y) \mapsto f_{\cos (\Omega-\alpha)}(g(x,y))$. Since $\cos \left(\Omega - \alpha \right)$ follows the arcsine distribution and based on the fact that the arcsine distribution is a special case of the Beta distribution, we approximate the function $g$ with a Beta distribution. Consequently, an approximate expression of the PDF of the LOS channel gain $H$ can be given by
\begin{equation}
\begin{split}
    &f_{H} (h) \\ 
    &\approx \frac{h^{-\nu} \left(\frac{h-h_{\min}^{*}}{h_{\max}-h_{\min}^{*}}\right)^{\alpha_H-1} \left(\frac{h_{\max} - h}{h_{\max}-h_{\min}^{*}}\right)^{\beta_H-1}}{M_2\left(-\nu, \alpha_H, \beta_H \right)}\mathcal{U}_{\left[ h_{\min}^{*},h_{\max} \right]}(h) \\ 
    &+ F_{\cos(\psi)}(\cos(\Psi_c)) \delta(h),
\end{split}
\end{equation}
where $\alpha_H>0$, $\beta_{H}>0$ and $M_2\left(-\nu, \alpha_H, \beta_H \right)$ is a normalization factor given by 
\begin{equation}
M_2\left(-\nu, \alpha_H, \beta_H \right) = \frac{G_2\left(-\nu, \alpha_H, \beta_H \right)}{\left[1-F_{\cos(\psi)}(\cos(\Psi_c)) \right]},
\end{equation}
such that $G_2$ is given in (26) at the top of the next page, 
\begin{figure*}[t] 
\begin{equation}
\begin{split}
G_2 \left(\gamma,\alpha_H, \beta_H \right) = B(\gamma,\alpha_H,\beta_H) \times &\left[ \Gamma (\beta_H) \Gamma (-\gamma) h_{\max}^{\alpha_H+\gamma} {}_{2}\tilde{F}_1\left(1-\alpha_H,-\alpha_H-\beta_H-\gamma+1;-\alpha_H-v+1;\frac{h_{\min}^{*}}{h_{\max}}\right) \right. \\ &\quad \left. -\Gamma (\alpha_H) h_{\min}^{{*}^{\alpha_H+\gamma}} \Gamma (\alpha_H+\beta_H+\gamma) {}_{2}\tilde{F}_1\left(1-\beta_H,\gamma+1;\beta_H+\gamma+1;\frac{h_{\min}^{*}}{h_{\max}}\right)\right],
\end{split}
\end{equation}
\noindent\makebox[\linewidth]{\rule{\textwidth}{0.4pt}}
\end{figure*} 
in which ${}_{2}\tilde{F}_1$ denotes the regularized hyper-geometric function and
\begin{equation}
B(\gamma,\alpha_H,\beta_H) = \frac{\pi  h_{\max}^{\beta_H-1} (h_{\max}-h_{\min}^{*})^{-\alpha_H-\beta_H+2}}{\sin (\pi  (\alpha_H+\gamma))\Gamma (-\gamma) \Gamma (\alpha_H+\beta_H+\gamma)}.
\end{equation}
Based on the above, it remains to derive the parameters $\left(\nu,\alpha_H,\beta_H\right)$ of $f_H$. Similar to the case of the MTL model, one approach to do this is through moments matching. Specifically, $\left(\nu,\alpha_H,\beta_H\right)$ can be obtained by solving the system of equations in (23), where for $i=1,2,3$, $m_i^a$ is expressed in this case as
\begin{equation}
m_i^a\left(\nu,\mu_H,b_H\right) = \frac{M_2\left(i-\nu, \alpha_H, \beta_H \right)}{M_2\left(-\nu, \alpha_H, \beta_H \right)}, \quad i \in \mathbb{N}.
\end{equation}
\subsection{Mobile Users}
\label{sub4B}
An approximate expression of the PDF of the LOS channel gain $H$ for the case of mobile LiFi users is given in the following theorem. 
\begin{theorem}
For the case of mobile users, an approximate expression of the PDF of the channel gain $H$ is given by 
\begin{equation}
    f_{H} (h) \approx \sum_{j=1}^3 \frac{1}{h^{\nu_i}} g_{j} (h) + F_{\cos(\psi)}(\cos(\Psi_c)) \delta(h),
\end{equation}
where, for $j=1,2,3$, $\nu_j >0$ and $g_{j}$ is a function with range $[h_{\min}^{*},h_{\max}]$.
\end{theorem}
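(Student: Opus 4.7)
The plan is to reduce Theorem 3 to Theorem 2 by exploiting linearity in the spatial distribution. The only structural difference between the stationary and mobile cases enters through the polar-distance PDF: in the mobile case, $f_r$ is a three-term polynomial $\sum_{j=1}^{3} a_j r^{b_j}/R_{\rm e}^{b_j+1}$ with $(b_1,b_2,b_3)=(1,3,5)$, whereas in the stationary case it is a single monomial $2r/R_{\rm e}^2$. Substituting this three-term form into the change of variables that produced $f_d$ in the preceding subsection yields $f_d(d) = \sum_{j=1}^{3} f_d^{(j)}(d)$, where each $f_d^{(j)}$ is proportional to the density one would obtain from a pure $r^{b_j}$ component. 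Since the function $g_H$ in (15) of Theorem~1 depends linearly on $f_d$, it decomposes as $g_H(h) = \sum_{j=1}^{3} g_H^{(j)}(h)$, with each summand having exactly the structural form that Theorem~2 treats for stationary users.

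Next, I would invoke the approximation argument developed in Appendix~B termwise. That argument isolates a factor $h^{-\nu}$ from the integrand of $g_H$ by performing the $d$-integration (or an equivalent change of variables to $h$) so that the remaining $\theta$-integrated residual is a bounded function on $[h_{\min}^{*},h_{\max}]$. Applied to each $g_H^{(j)}$ separately, the same manipulation produces a prefactor $h^{-\nu_j}$, where the exponent $\nu_j$ is determined by $b_j$ through the $d^{m+3}$ factor inside the integrand and the Jacobian of the $r \mapsto d \mapsto h$ transformation. Because the three components are integrated independently, there is no interaction between them and one obtains three distinct residual functions $g_j$ with range contained in $[h_{\min}^{*},h_{\max}]$, each bounded and amenable to further approximation (for instance by MTL or MB forms as in Subsection~\ref{sub4A}).

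Finally, the Dirac component is preserved as is: the probability mass $F_{\cos(\psi)}(\cos(\Psi_c))$ in (13) is linear in $f_d$, so summing the three pieces of $f_d$ reassembles the mobile-user field-of-view probability into a single Dirac term with the same coefficient. Adding up the three approximations and this residual yields exactly (29).

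The main obstacle I expect is the careful bookkeeping of the exponents $\nu_j$ and the ranges of the residuals $g_j$: one must confirm that the termwise approximation in Appendix~B remains valid when applied to each monomial component, in particular that each $g_j$ inherits boundedness on $[h_{\min}^{*},h_{\max}]$ and that each $\nu_j>0$ for $b_j \in \{1,3,5\}$ and the relevant Lambertian order $m$. A secondary concern is whether the parameters of the subsequent MTL/MB fits, to be obtained via moment matching analogous to (23), should be shared across the three components or allowed to differ; the cleanest formulation lets each $g_j$ carry its own parameters, which preserves the sum structure in (29) and keeps the approximation independent of the stationary-user derivation.
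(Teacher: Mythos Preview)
Your proposal is correct and follows essentially the same route as the paper's Appendix~C: both exploit the three-term polynomial structure of the RWP polar-distance density, push this linearity through the change of variables to obtain a three-term $f_X$ (equivalently, a three-term $f_d$ and hence $g_H$), and then apply the Appendix~B product-of-random-variables approximation termwise to extract a factor $h^{-\nu_j}$ from each component. The only cosmetic difference is that you phrase the decomposition at the level of $f_d$ and $g_H$, whereas the paper carries out the computation directly on $f_X(x)=\sum_{j=1}^{3}\frac{a_j c^{b_j+\frac{2}{m+2}}}{(m+2)R^{b_j+1}}x^{-\left(b_j+\frac{m+4}{m+2}\right)}$ and reads off $\nu_j=b_j+\frac{m+4}{m+2}>0$ explicitly.
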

\begin{proof}
See Appendix C.
\end{proof}
It is important to highlight here that, for $j=1,2,3$, the functions $h \mapsto \frac{1}{h^{\nu_j}}$ and $h \mapsto g_{j} (h)$ express respectively the effects of user mobility and the random orientation of the UE on the LOS channel gain $H$. At this point, the missing part is the functions $g_{j}$, for $j=1,2,3$, that provide the best approximation for the PDF of the LOS channel gain $f_{H}$. In the following, we provide two expressions for each function $g_{j}$ for $j=1,2,3$.
\paragraph*{1) The Sum of Modified Truncated Gaussian (SMTG) Model} \quad \\ 
\indent Since for $j=1,2,3$, the functions $h \mapsto g_{j} (h)$ express the effect of the random orientation of the UE on the channel gain $H$ and motivated by the fact that, for the case of mobile LiFi users, the elevation angle $\theta$ follows a truncated Gaussian distribution as shown in (8), one reasonable choice for the functions $g_{j}$ is the truncated Gaussian distribution. Consequently, an approximate expression of the PDF of the LOS channel gain $H$ can be given by
\begin{equation}
\begin{split}
    f_{H} (h) &\approx \frac{\sum_{j=1}^3 h^{-\nu_j} \exp \left( -\frac{\left(h-\mu_{H,j}\right)^2}{ 2 \sigma_{H,j}^2} \right)}{\sum_{j=1}^3 M_3\left(-\nu_j, \mu_{H,j}, \sigma_{H,j} \right)} \mathcal{U}_{\left[ h_{\min}^{*},h_{\max} \right]}(h) \\ 
    &+ F_{\cos(\psi)}(\cos(\Psi_c)) \delta(h),
\end{split}
\end{equation}
where for $j=1,2,3$, $\mu_{H,j} \in [h_{\min}^{*},h_{\max}]$, $\sigma_{H,j}>0$ and $M_3\left(-\nu_j, \mu_{H,j}, \sigma_{H,j} \right)$ is a normalization factor that is given by 
\begin{equation}
M_3\left(-\nu_j, \mu_{H,j}, \sigma_{H,j} \right) = \frac{\int_{h_{\min}^{*}}^{h_{\max}} h^{-\nu_j} \exp \left( -\frac{\left(h-\mu_{H,j}\right)^2}{ 2 \sigma_{H,j}^2} \right) \mathrm{d}h}{\left[1-F_{\cos(\psi)}(\cos(\Psi_c)) \right]}.
\end{equation}
Now, in order to to have the complete closed-form expression of $f_H$, we have to determine the parameters $\left\{ \left(\nu_j,\mu_{H,j},b_{H,j}\right), j=1,2,3 \right\}$. Similar to the one of the stationary users case, one approach to determine these parameters is through moments matching. Specifically, since only nine parameters need to be determined, which are $\left\{\left(\nu_j,\mu_{H,j},\sigma_{H,j}\right) \left| j=1,2,3 \right. \right\}$, they can be obtained by solving the following system of equations
\begin{equation}
    m_i^a = m_i^e, \quad \text{for} \, \, i=1,2,..,9,
\end{equation}
where, for $i=1,2,...,9$, $m_i^a$ is expressed in this case as
\begin{equation}
m_i^a = \frac{\sum_{j=1}^3 M_3\left(i-\nu_j, \mu_{H,j}, b_{H,j} \right)}{\sum_{j=1}^3 M_3\left(-\nu_j, \mu_{H,j}, \sigma_{H,j} \right)}.
\end{equation}
\paragraph*{2) The Sum of Modified Beta (SMB) Model} \quad \\ 
\indent Motivated by the same reasons as for the MB model in Section IV-A1, we approximate each function $g_{j}$, for $ j=1,2,3$, with a Beta distribution. Consequently, an approximate expression of the PDF of the LOS channel gain $H$ is given in (34) at the top of this page, where $\alpha_{H,j}>0$, $\beta_{H,j}>0$ and $M_2\left(-\nu_j, \alpha_{H,j}, \beta_{H,j} \right)$ is given in (25).
\begin{figure*}[t] 
\begin{equation}
f_{H} (h) \approx \frac{\sum_{j=1}^3 h^{-\nu_j} \left(\frac{h-h_{\min}^{*}}{h_{\max}-h_{\min}^{*}}\right)^{\alpha_{H,j}-1} \left(\frac{h_{\max} - h}{h_{\max}-h_{\min}^{*}}\right)^{\beta_{H,j}-1}\mathcal{U}_{\left[ h_{\min}^{*},h_{\max} \right]}(h)}{\sum_{j=1}^3 M_2\left(-\nu_j, \alpha_{H,j}, \beta_{H,j} \right)} + F_{\cos(\psi)}(\cos(\Psi_c)) \delta(h),
\end{equation}
\noindent\makebox[\linewidth]{\rule{\textwidth}{0.4pt}}
\end{figure*} 
Finally, it remains now to derive the parameters $\left\{\left(\nu_j,\alpha_{H,j},\beta_{H,j}\right) \left| j=1,2,3 \right. \right\}$ of $f_H$. Similar to the STMG model, these parameters can be obtained by solving the by solving the system of equations in (32), where for $i=1,2,...,9$, $m_i^a$ is expressed in this case as
\begin{equation}
m_i^a = \frac{\sum_{j=1}^3 M_2\left(i-\nu_j, \alpha_{H,j}, \beta_{H,j} \right)}{\sum_{j=1}^3 M_2\left(-\nu, \alpha_H, \beta_H \right)}.
\end{equation}
\subsection{Summary of the Proposed Models}
\subsection{Summary of the Proposed Models}
A detailed algorithm for implementing the proposed statistical channel models for indoor LiFi systems is presented in Algorithm 1.
\begin{algorithm}[t]
\caption{Detailed algorithm for implementing the proposed statistical channel models}
\begin{algorithmic} 
\STATE 1. \textbf{Input}: 
\begin{enumerate}[label=\roman*)]
\item Attocell dimensions $\left(R, h_{\rm a}\right)$.
\item AP's parameters $\left(\rho, \phi_{1/2}\right)$.
\item UE's height $h_{\rm u}$
\item UE's parameters $\left(R_{\rm p},n_{\rm c}, A_{\rm g},\Psi_c \right)$.
\end{enumerate}
\STATE 2. \textbf{Calculate} $H_0$ as shown in (3).
\STATE 3. \textbf{Calculate} $F_{\cos(\psi)}(\cos(\Psi_c))$ as shown in (13).
\STATE 4. \textbf{If} the LiFi user is stationary:
\begin{enumerate}[label=\roman*)]
\item MTL model:
\begin{enumerate}[label=\alph*)]
\item Estimate the parameters using (23).
\item Inject the parameters into the PDF in (18).
\end{enumerate}
\item MB model:
\begin{enumerate}[label=\alph*)]
\item Estimate the parameters using (23).
\item Inject the parameters into the PDF in (24).
\end{enumerate}
\end{enumerate}
\STATE \quad \textbf{elseif} the LiFi user is mobile:
\begin{enumerate}[label=\roman*)]
\item SMTG model:
\begin{enumerate}[label=\alph*)]
\item Estimate the parameters using (32).
\item Inject the parameters into the PDF in (30).
\end{enumerate}
\item SMB model:
\begin{enumerate}[label=\alph*)]
\item Estimate the parameters using (32).
\item Inject the parameters into the PDF in (35).
\end{enumerate}
\end{enumerate}
\STATE \quad \textbf{end}
\end{algorithmic}
\end{algorithm} 
\section{Simulation Results and Discussions}
In this paper, we consider a typical indoor LiFi attocell \cite{arfaoui2019SNR,gupta2018statistics}. Parameters used throughout the paper are shown in Table \ref{T2}. 
\begin{table}[t]
\caption{Simulation Parameters}
\centering
\renewcommand{\arraystretch}{1.3} 
\setlength{\tabcolsep}{0.18cm} 
\begin{tabular}{| c | c | c |}
  \hline 
  Parameter & Symbol & Value \\
  \hline
  Ceiling height & $h_a$ & $2.4$ m \\ 
  \hline
  LED half-power semiangle & $\phi_{1/2}$ & $60^\circ$\\ 
  \hline 
  LED conversion factor & $\rho$ & $0.7$ W/A \\
  \hline
  PD responsivity & $R_p$ & $0.6$ A/W \\ 
  \hline
  PD geometric area & $A_g$ & $1$ cm$^2$ \\ 
  \hline 
  Optical concentrator refractive index & $n_c$ & 1 \\
  \hline 
  UE's height (stationary) & $h_u$ & $0.9$ m \\ 
  \hline
  UE's height (mobile) & $h_u$ & $1.4$ m \\
  \hline 
\end{tabular} 
\label{T2}
\end{table}
In Subsection \ref{secV1}, we present the PDF and CDF of the LOS channel gain $H$ for the case of stationary and mobile LiFi users. In Subsection \ref{secV2}, we investigate the error performance of Indoor LiFi systems using the derived statistics of the LOS channel gain $H$. Finally, based on the error performance presented in \ref{secV2}, we propose in subsection \ref{secV3} an optimized design for the indoor cellular system that can enhance the performance of LiFi systems.
\subsection{Channel Statistics}
\label{secV1}
For stationary LiFi users, Figs. \ref{fig:S1} and \ref{fig:S2} present the theoretical, simulated and approximated PDF and CDF of the LOS channel gain $H$, when a radius of the attocell of $R=1$m and $R=2.5$m, respectively. For both cases, two different values for the field of view of the UE were considered, which are $\Psi_c = 90^\circ$ and $60^\circ$. 
\begin{figure}[t]
\centering     
\subfigure[PDF: $\Psi_c =  90^\circ$.]{\label{fig:S1a}\includegraphics[width=0.51\linewidth]{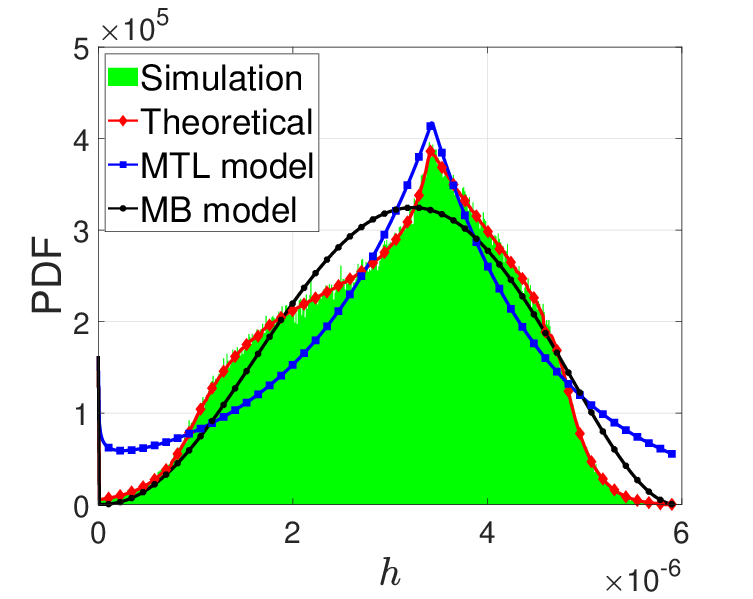}}%~
\subfigure[CDF: $\Psi_c =  90^\circ$.]{\label{fig:S1b}\includegraphics[width=0.51\linewidth]{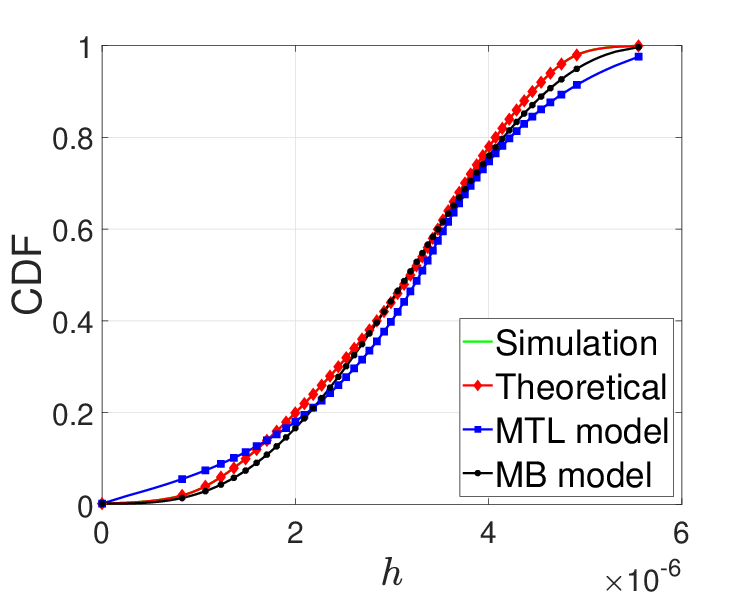}}
\subfigure[PDF: $\Psi_c =  60^\circ$.]{\label{fig:S1c}\includegraphics[width=0.51\linewidth]{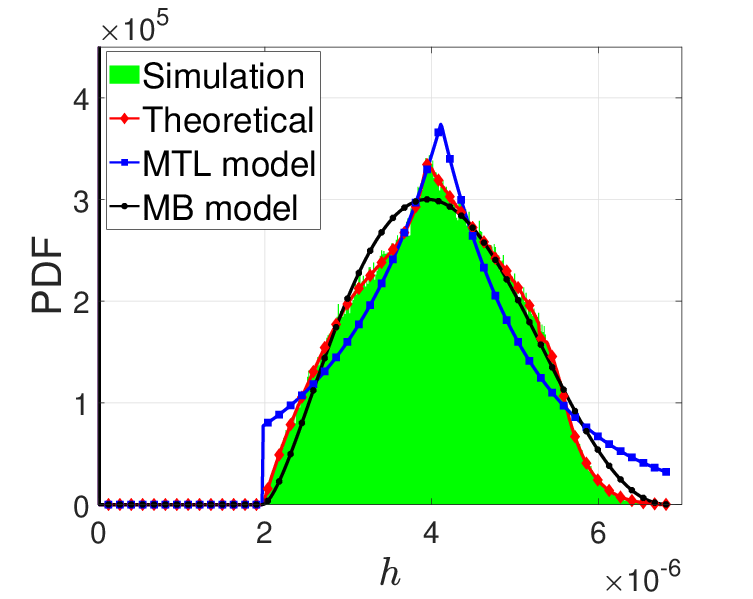}}%~
\subfigure[CDF: $\Psi_c =  60^\circ$.]{\label{fig:S1d}\includegraphics[width=0.51\linewidth]{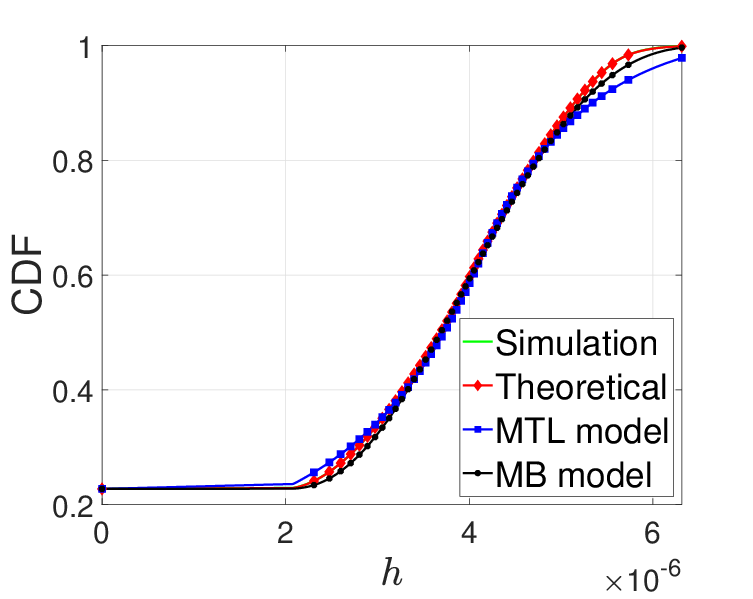}}
\caption{Comparison between the simulation, theoretical and approximation results of the PDF and the CDF of the LOS channel gain $H$ for the case of stationary LiFi users when $R = 1$m.}
\label{fig:S1}
\end{figure}
\begin{figure}[t]
\centering     
\subfigure[PDF: $\Psi_c =  90^\circ$.]{\label{fig:S2a}\includegraphics[width=0.51\linewidth]{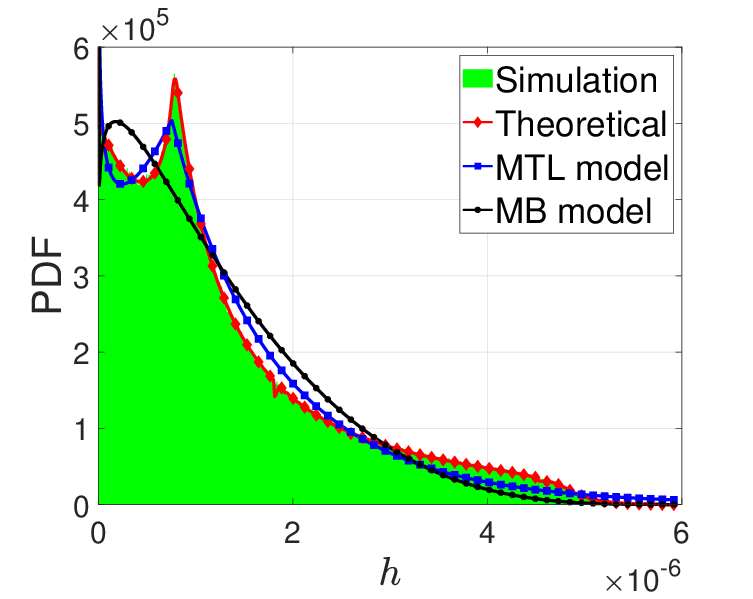}}%~
\subfigure[CDF: $\Psi_c =  90^\circ$.]{\label{fig:S2b}\includegraphics[width=0.51\linewidth]{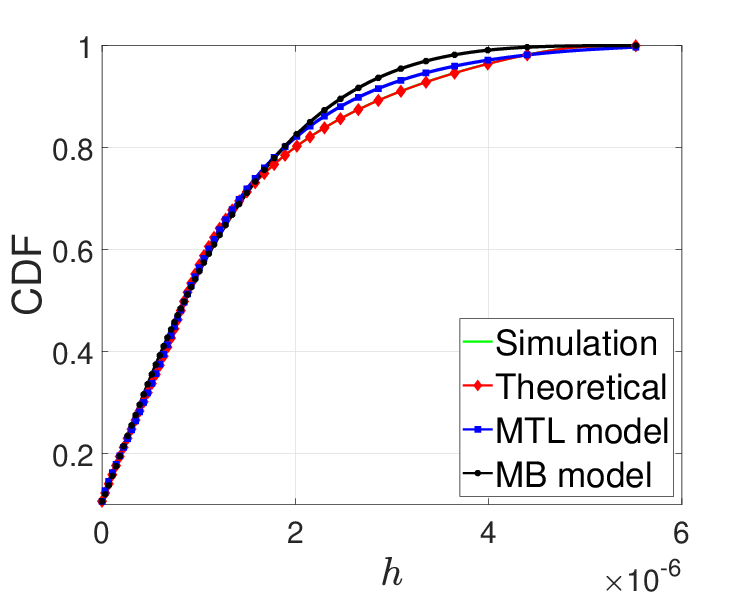}}
\subfigure[PDF: $\Psi_c =  60^\circ$.]{\label{fig:S2c}\includegraphics[width=0.51\linewidth]{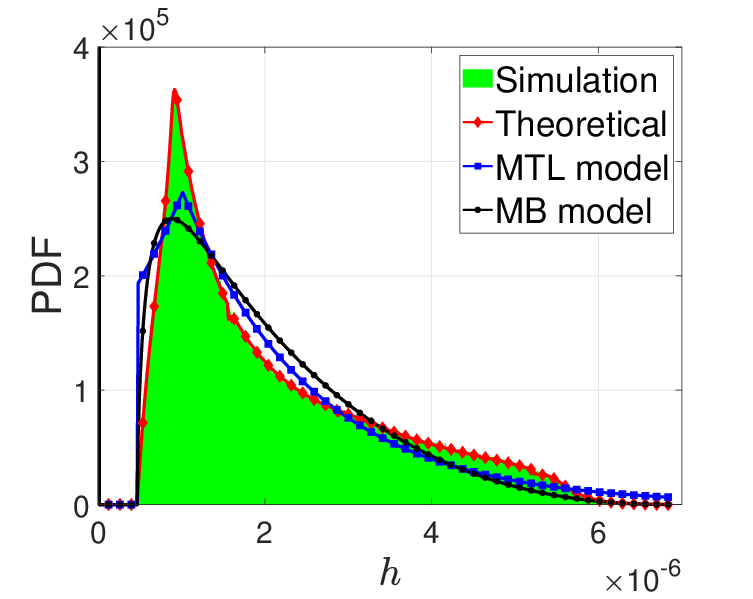}}%~
\subfigure[CDF: $\Psi_c =  60^\circ$.]{\label{fig:S2d}\includegraphics[width=0.51\linewidth]{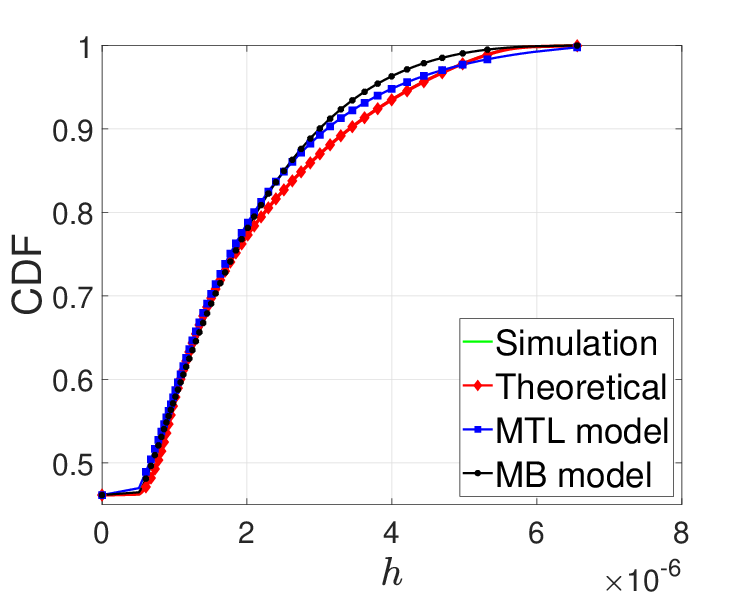}}
\caption{Comparison between the simulation, theoretical and approximation results of the PDF and the CDF of the LOS channel gain $H$ for the case of stationary LiFi users when $R = 2.5$m.}
\label{fig:S2}
\end{figure}
These figures show that the proposed MTL and MB models offer good approximation for the distribution of the LOS channel gain $H$. Analytically, in order to evaluate the goodness of the proposed MTL and MB models, we use the well-known Kolmogorov-Smirnov distance (KSD) \cite{massey1951kolmogorov}. In fact, the KSD measures the absolute distance between two distinct CDFs $F_1$ and $F_2$ \cite{massey1951kolmogorov}, i.e., 
\begin{equation}
    \text{KSD} = \underset{x}{\max} \left|F_1(x)-F_2(x) \right|.
\end{equation}
Obviously, smaller values of KSD correspond to more similarity between distributions. In our case, the KSD of the MTL and MB models are shown in Table \ref{T3}.
\begin{table}[t]
\caption{KSD of MTL and MB models}
\begin{center}
\renewcommand{\arraystretch}{1.3} 
\setlength{\tabcolsep}{0.18cm} 
\begin{tabular}{| l | c | c | c | c |}
    \cline{2-5} 
  \multicolumn{1}{c|}{}  & \multicolumn{2}{c|}{MTL model} & \multicolumn{2}{c|}{MB model} \\
  \cline{2-5} 
  \multicolumn{1}{c|}{}  & $\Psi_c = 90^\circ$ & $\Psi_c = 60^\circ$ & $\Psi_c = 90^\circ$ & $\Psi_c = 60^\circ$ \\
  \hline
  $R=1$m & 0.0669 & 0.0448 & 0.0336 & 0.0197\\ 
  \hline
  $R=2.5$m & 0.0239 & 0.0241 & 0.0444 & 0.0316\\ 
  \hline 
\end{tabular} 
\end{center}
\label{T3}
\end{table}
As it can be seen in this table, the maximum KSD value for the MTL and MB models are 0.0669 and 0.0444, respectively, which demonstrates the good approximation offered by the MTL and MB models. \\ 
\indent For mobile LiFi users, Figs. \ref{fig:S3} and \ref{fig:S4} present the theoretical, simulated and approximated PDF and CDF of the LOS channel gain $H$, when the radius of the attocell is $R=1$m and $R=2.5$m, respectively. For both cases, two different values for the field of view of the UE were considered, which are $\Psi_c = 90^\circ$ and $60^\circ$. 
\begin{figure}[t]
\centering     
\subfigure[PDF: $\Psi_c =  90^\circ$.]{\label{fig:S3a}\includegraphics[width=0.51\linewidth]{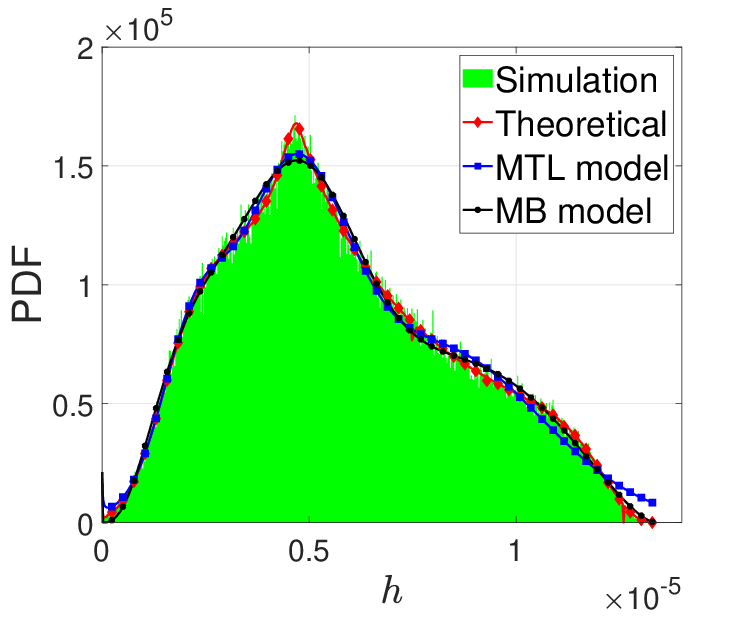}}%~
\subfigure[CDF: $\Psi_c =  90^\circ$.]{\label{fig:S3b}\includegraphics[width=0.51\linewidth]{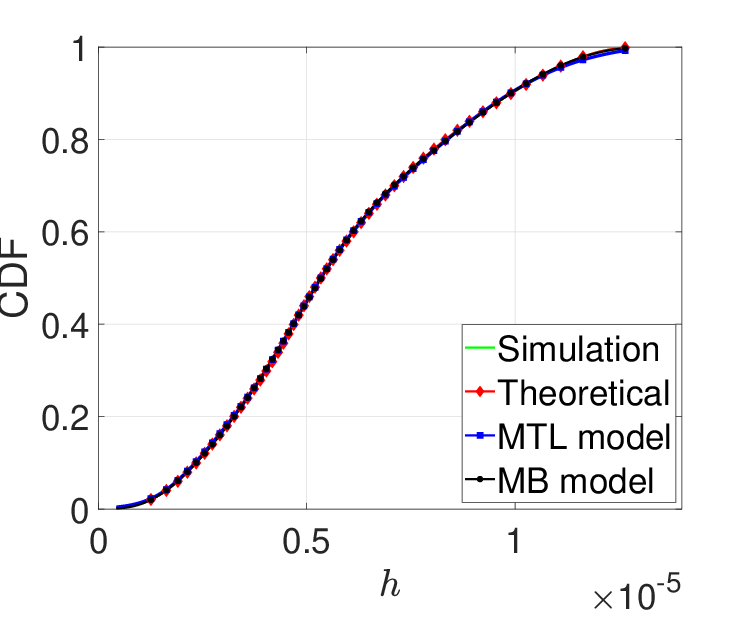}}
\subfigure[PDF: $\Psi_c =  60^\circ$.]{\label{fig:S3c}\includegraphics[width=0.51\linewidth]{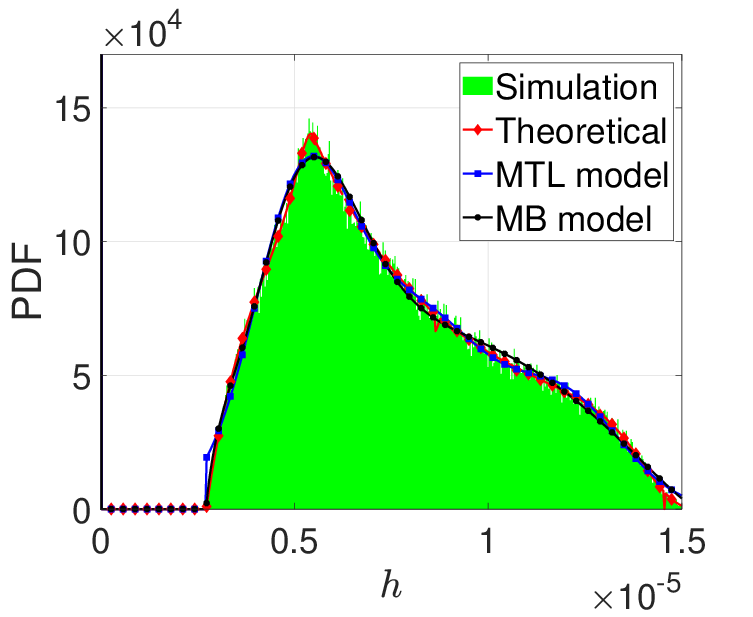}}%~
\subfigure[CDF: $\Psi_c =  60^\circ$.]{\label{fig:S3d}\includegraphics[width=0.51\linewidth]{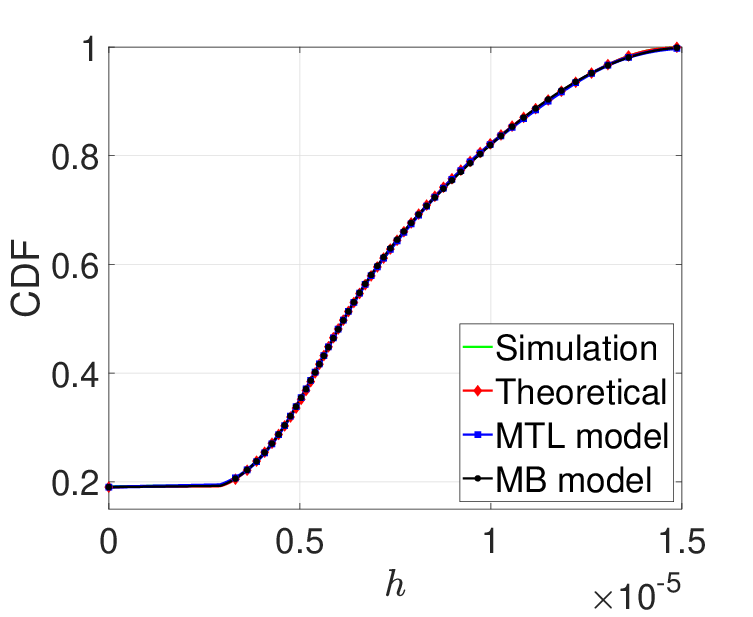}}
\caption{Comparison between the simulation, theoretical and approximation results of the PDF and the CDF of the LOS channel gain $H$ for the case of mobile LiFi users when $R = 2.5$m.}
\label{fig:S3}
\end{figure}
\begin{figure}[t]
\centering     
\subfigure[PDF: $\Psi_c =  90^\circ$.]{\label{fig:S4e}\includegraphics[width=0.51\linewidth]{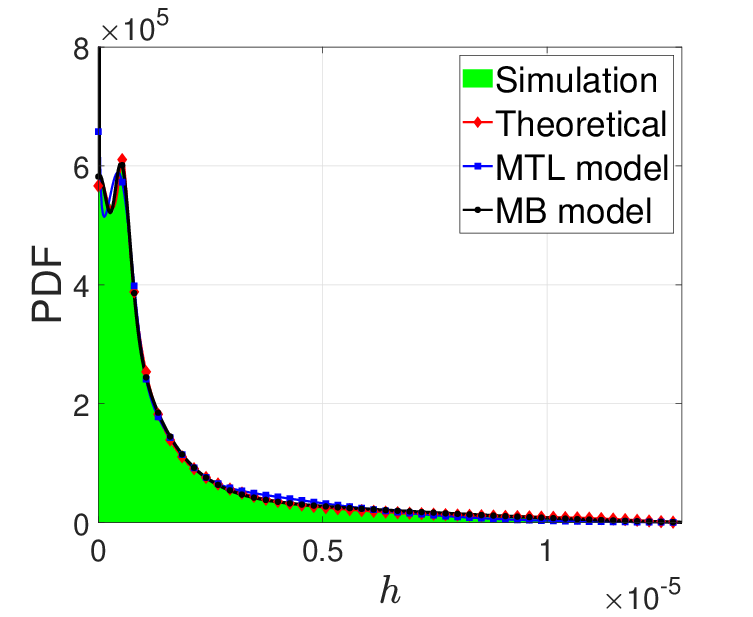}}%~
\subfigure[CDF: $\Psi_c =  90^\circ$.]{\label{fig:S4f}\includegraphics[width=0.51\linewidth]{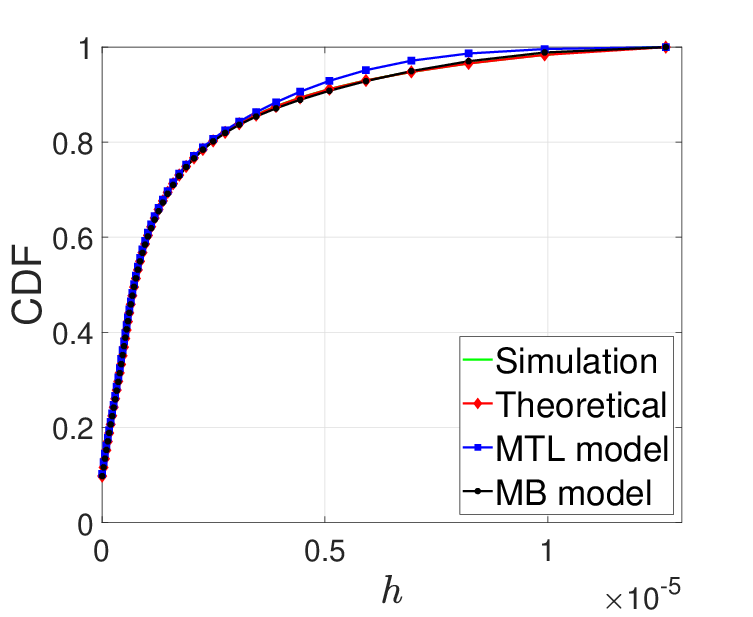}}
\subfigure[PDF: $\Psi_c =  60^\circ$.]{\label{fig:S4g}\includegraphics[width=0.51\linewidth]{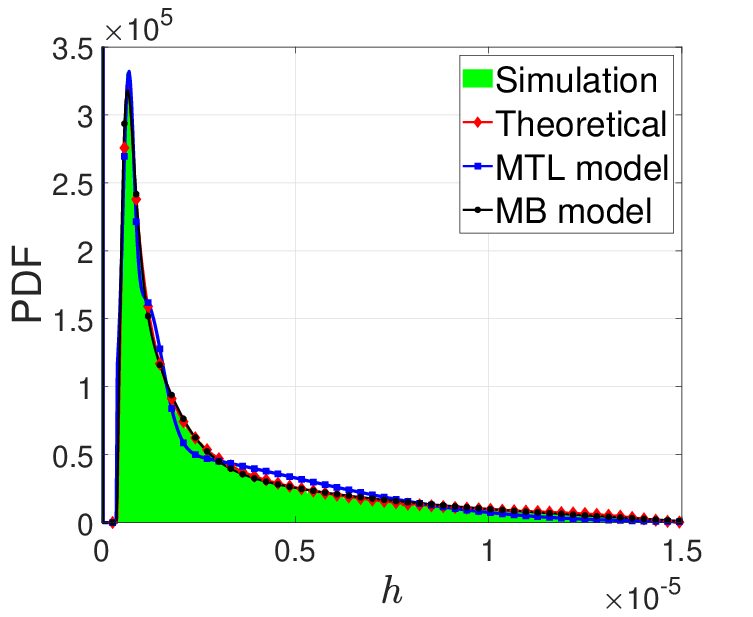}}%~
\subfigure[CDF: $\Psi_c =  60^\circ$.]{\label{fig:S4h}\includegraphics[width=0.51\linewidth]{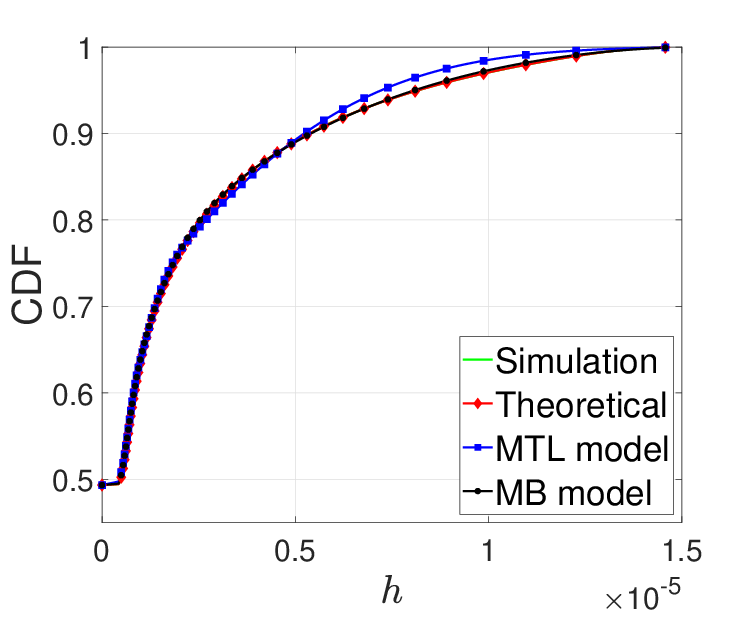}}
\caption{Comparison between the simulation, theoretical and approximation results of the PDF and the CDF of the LOS channel gain $H$ for the case of mobile LiFi users when $R = 2.5$m.}
\label{fig:S4}
\end{figure}
These figures show that the proposed SMTG and SMB models offer good approximation for the distribution of the LOS channel gain $H$. Furthermore, using the KSD metric, Table \ref{T4} presents the KSD of the SMTG and the SMB models, where it shows that their maximum KSD values are 0.0238 and 0.0054, respectively. This result demonstrates the good approximation offered by the MTL and MB models.
\begin{table}[t]
\caption{KSD of SMTG and SMB models}
\begin{center}
\renewcommand{\arraystretch}{1.3} 
\setlength{\tabcolsep}{0.18cm} 
\begin{tabular}{| l | c | c | c | c |}
    \cline{2-5} 
  \multicolumn{1}{c|}{}  & \multicolumn{2}{c|}{MTL model} & \multicolumn{2}{c|}{MB model} \\
  \cline{2-5} 
  \multicolumn{1}{c|}{}  & $\Psi_c = 90^\circ$ & $\Psi_c = 60^\circ$ & $\Psi_c = 90^\circ$ & $\Psi_c = 60^\circ$ \\
  \hline
  $R=1$m & 0.0082 & 0.0037 & 0.0048 & 0.0030\\ 
  \hline
  $R=2.5$m & 0.0238 & 0.0156 & 0.0054 & 0.0047\\ 
  \hline 
\end{tabular} 
\end{center}
\label{T4}
\end{table}
\subsection{Error Performance}
\label{secV2}
\begin{figure}[t]
\centering     
\subfigure[Stationary user, $\Psi_c =  90$ .]{\label{fig:S5a}\includegraphics[width=0.525\linewidth]{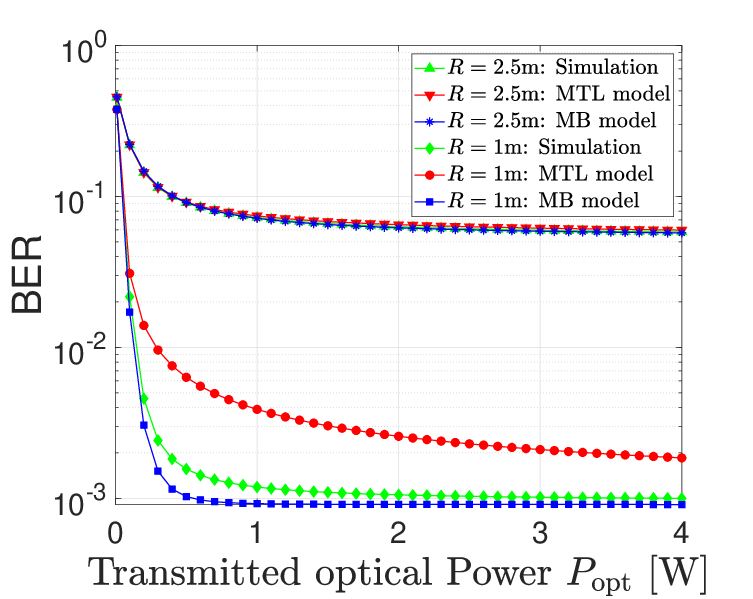}}%~
\subfigure[Stationary user, $\Psi_c =  60^\circ$.]{\label{fig:S5b}\includegraphics[width=0.525\linewidth]{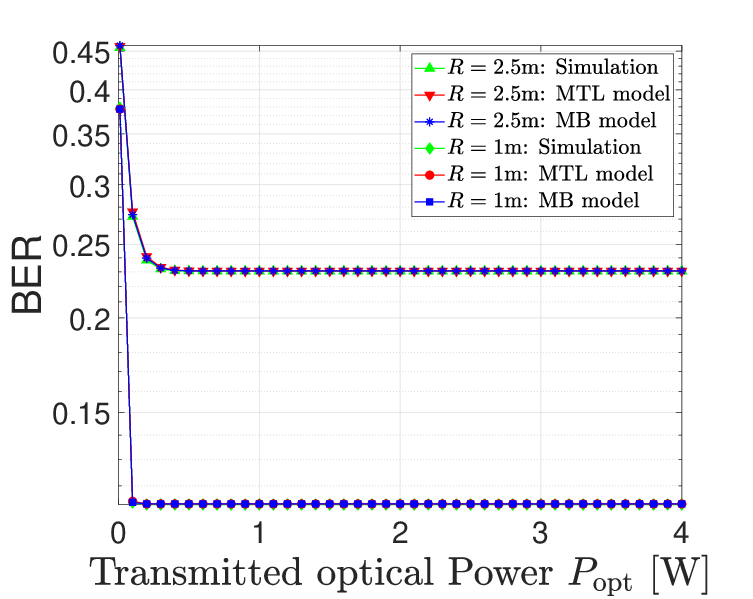}}
\subfigure[Mobile user, $\Psi_c =  90^\circ$.]{\label{fig:S5c}\includegraphics[width=0.525\linewidth]{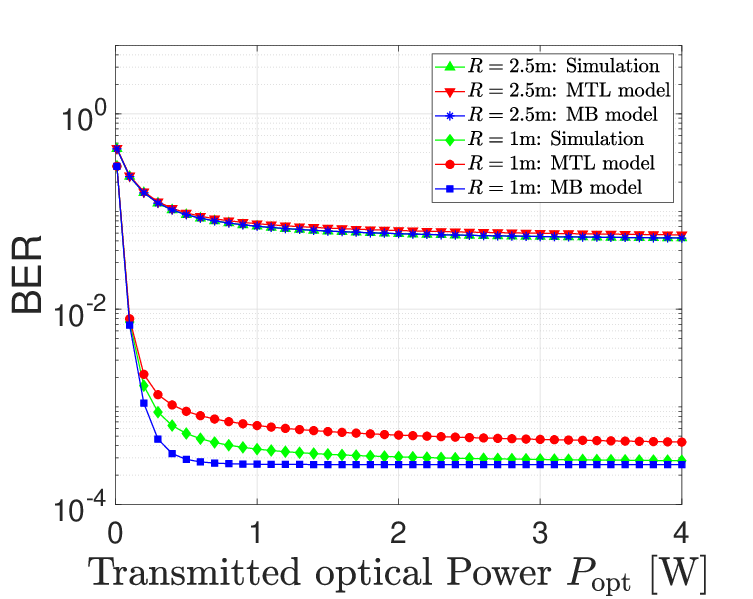}}%~
\subfigure[Mobile user, $\Psi_c =  60^\circ$.]{\label{fig:S5d}\includegraphics[width=0.525\linewidth]{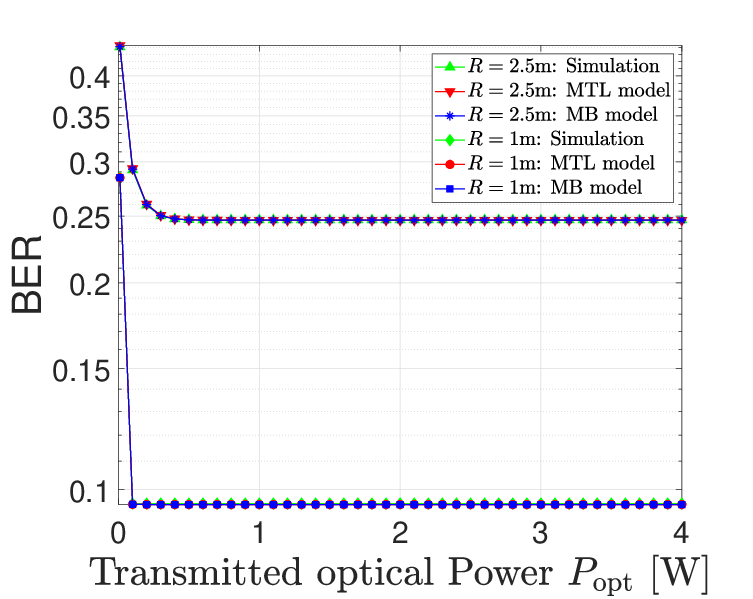}}
\caption{BER performance of OOK modulation versus the transmitted optical power for stationary and mobile LiFi users.}
\label{fig:S5}
\end{figure}
Fig. \ref{fig:S5} presents the average BER performance of the on-off keying (OOK) modulation versus the transmitted optical power $P_{\rm opt}$ for stationary and mobile users for the values of the attocell radius of $R=1$m and $2.5$m and for the values of the field of view $\Psi_c=90^\circ$ and $\Psi_c=60^\circ$. Considering stationary users, this figure shows that the BER results of the MTL and MB models match perfectly the simulated BER for both cases when $\left(R,\Psi_c \right) = \left(2.5\text{m},90^\circ \right)$ and when $\Psi_c = 60^\circ$. However, for the case when $\left(R,\Psi_c \right) = \left(1\text{m},90^\circ \right)$, we remark that the average BER results of the MB model match the simulated BER better than the ones of the MTL model. This can be also seen from the values of the KSD in Table \ref{T3}, where we can see that the KSD of the MB model is lower than the one of the MTL model when $\left(R,\Psi_c \right) = \left(1\text{m},90^\circ \right)$. In other words, when $\left(R,\Psi_c \right) = \left(1\text{m},90^\circ \right)$, the MB model offers better accuracy than the MTL model. This is mainly due to the assumptions made for both models. In fact, when the radius of the attocell $R$ is small and by referring to (5), the random variable $\cos \left(\Omega-\alpha \right)$ is dominant in $\cos \left(\psi \right)$. Hence, assuming that the distribution of the random orientation of the UE can be approximated by a Beta distribution makes more sense. For mobile users, the same figure shows that the BER results of the SMTG and the SMB models match perfectly the simulated BER for both cases when $\left(R,\Psi_c \right) = \left(2.5\text{m},90^\circ \right)$ and when $\Psi_c = 60^\circ$. However, for the case when $\left(R,\Psi_c \right) = \left(1\text{m},90^\circ \right)$, we remark that the BER results of the SMB model matches the simulated BER better than the ones of the SMTG model. Similar to the case of stationary users, the SMB model offers better accuracy than the SMTG model when $\left(R,\Psi_c \right) = \left(1\text{m},90^\circ \right)$ due to the assumptions made for both models.  \\ 
\indent Fig. \ref{fig:S5} shows also two important facts about the BER performance of LiFi users. First, it can be seen that the BER performance degrades heavily when either the radius of the attocell $R$ increases or the field of view of the LiFi receiver decreases. Second, the BER saturates as the transmitted optical power increases. These two facts can be explained by the following corollary.  \\
\begin{corollary}
At high transmitted optical power $P_{\rm opt}$, the average probability of error of the $M$-ary pulse amplitude modulation (PAM) for the considered LiFi system is given by 
\begin{equation}
    \lim_{P_{\rm opt} \rightarrow \infty} P_e \left( P_{\rm opt} \right) = \frac{F_{\cos(\psi)}(\cos(\Psi_c))}{2}.
\end{equation}
\end{corollary}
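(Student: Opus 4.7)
The plan is to exploit the mixture structure of $f_H$ revealed in Theorem~1: the PDF consists of a continuous piece $g_H(h)\mathcal{U}_{[h_{\min}^*,h_{\max}]}(h)$ plus an atom $F_{\cos(\psi)}(\cos(\Psi_c))\,\delta(h)$ at the origin. Since the atom captures exactly the event that the incidence angle exceeds the field-of-view $\Psi_c$ (no usable LOS link), and the continuous piece is supported on an interval strictly bounded away from $0$ whenever $\Psi_c<\pi/2$, the average error probability naturally splits into two pieces whose asymptotics can be analysed separately.

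First, I would write the conditional BER of $M$-PAM over AWGN in the standard form $P_e(h;P_{\rm opt})=c_M\,Q\!\bigl(\gamma_M\, h\, P_{\rm opt}/\sigma\bigr)$, where $c_M,\gamma_M$ depend only on the modulation order and on the paper's peak-power normalisation $A=\kappa P_{\rm opt}$. Plugging $h=0$ gives $P_e(0;P_{\rm opt})=c_M\, Q(0)=c_M/2$, which under the convention used here equals $1/2$. Thus the unconditional BER reads
\begin{equation}
P_e(P_{\rm opt})=\int_{h_{\min}^*}^{h_{\max}} P_e(h;P_{\rm opt})\, g_H(h)\, \mathrm{d}h \;+\; \frac{1}{2}\, F_{\cos(\psi)}(\cos(\Psi_c)),
\nonumber
\end{equation}
where the second term is already independent of $P_{\rm opt}$ and matches the claimed limit.

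It remains to show that the integral tends to $0$ as $P_{\rm opt}\to\infty$. For every fixed $h\in[h_{\min}^*,h_{\max}]$ the argument of the $Q$-function diverges (because $h_{\min}^*>0$), so $P_e(h;P_{\rm opt})\to 0$ pointwise. Since $0\le P_e(h;P_{\rm opt})\le c_M$ uniformly in $P_{\rm opt}$ and $g_H$ is integrable over the bounded interval $[h_{\min}^*,h_{\max}]$, the dominated convergence theorem yields
\begin{equation}
\lim_{P_{\rm opt}\to\infty}\int_{h_{\min}^*}^{h_{\max}} P_e(h;P_{\rm opt})\, g_H(h)\, \mathrm{d}h = 0,
\nonumber
\end{equation}
which completes the argument.

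There is no genuinely hard step here; the whole content of the corollary is the observation that the atom at $H=0$ produces an irreducible error floor equal to half the blocking probability. The only delicate point to verify cleanly is the uniform domination needed for DCT and the strict positivity of $h_{\min}^*$, which in turn hinges on $\cos(\Psi_c)>0$; the boundary case $\Psi_c=\pi/2$ can be handled separately by noting that then $F_{\cos(\psi)}(\cos(\Psi_c))=F_{\cos(\psi)}(0)$ and the argument of $Q$ still diverges for Lebesgue-almost every $h>0$, so the conclusion persists.
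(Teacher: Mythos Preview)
Your proposal is correct and follows essentially the same route as the paper: split $P_e$ via the mixture structure of Theorem~1 into the continuous-part integral and the atom at $h=0$, apply dominated convergence to kill the first term, and evaluate the second as $Q(0)=\tfrac12$ times the blocking probability. Your treatment is in fact slightly more careful than the paper's, since you make the dominating function and the role of $h_{\min}^*>0$ explicit and you address the boundary case $\Psi_c=\pi/2$.
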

\begin{proof}
See appendix D.
\end{proof}
The result of Corollary 1 shows that, even when the transmitted optical power $P_{\rm opt}$ is high, the BER is stagnating at $\frac{F_{\cos(\psi)}(\cos(\Psi_c))}{2}$. This result is directly related to the cases when the AP is out of the FOV of the LiFi receiver. On the other hand, based on its expression in (19), $F_{\cos(\psi)}(\cos(\Psi_c))$ is a function of the attocell radius $R$ and the field of view of the receiver $\Psi_c$. Therefore, since $d_{\max}$ increases as $R$ increases, then $F_{\cos(\psi)}(\cos(\Psi_c))$ is an increasing function in $R$. In addition, since $x \mapsto F_{\cos(\psi)}(x)$ is a CDF, it is an increasing function, and due to the fact that $x \mapsto \cos (x)$ is a decreasing function within $[0,\pi/2]$, then $F_{\cos(\psi)}(\cos(\Psi_c))$ increases as $\Psi_c$ decreases. The aforementioned reasons explain the bad BER performance of the LiFi system when either the radius of the attocell $R$ increases or the field of view $\Psi_c$ decreases. \\ 
\indent From a practical point of view, the above performance can be explained as follows. Recall that 
\begin{subequations}
\begin{align}
    F_{\cos(\psi)}(\cos(\Psi_c)) &= {\rm Pr} \left( \cos(\psi) \leq \cos(\Psi_c) \right) \\ 
    &= {\rm Pr} \left( H \leq 0 \right),
\end{align}
\end{subequations}
which is literally the outage probability of the LiFi system, i.e., the probability that the UE is not connected to the AP even when it is inside the attocell. Obviously, for large values of $R$ or small values of $\Psi_c$, the probability that the LiFi receiver is not connected to the AP increases. This is mainly due to the effects the random location of the LiFi user along with the random orientation of the UE and it explains the bad BER performance in this case. \\
\indent The question that may come to mind here is how can one enhance the performance of the LiFi system under such a realistic environment? Recently, some practical solutions have been proposed in the literature to alleviate the effects of the random behaviour of LiFi channel. These solutions include the use of MIMO LiFi systems along with transceiver designs that have high spatial diversity gains such as the multidirectional receiver (MDR) \cite{mohammad2018optical,ImanICCW19}, the omnidirectional transceiver \cite{chen2018omnidirectional} and the angular diversity transceiver \cite{chen2018reduction}. In the following subsection, we propose a new design of indoor LiFi MIMO systems that can alleviate the effects of the random location of the LiFi user along with the random orientation of the UE.
\subsection{Design Consideration of Indoor LiFi Systems}
\label{secV3}
\indent The concept of optical MIMO systems has been introduced in practical LiFi systems, where multiple LiFi APs cooperate together and serve multiple users within the resulting illuminated area \cite{fath2012performance,tavakkolnia2018energy,tavakkolnia2019mimo,soltani2018bidirectional}. Each LiFi AP creates an optical attocell and the respective illumination areas of the adjacent attocells overlap with each other. Consider the indoor LiFi MIMO system shown in Fig. \ref{fig:S6}, which consists of five APs that correspond to small and adjacent attocells, where each has radius $R_c$. The distance between the AP of the attocell in the middle (green attocell), which we refer to as the reference attocell, and the APs of the remaining adjacent attocells is $D_c$.  \\ 
\indent Let us assume that a LiFi user is located within the reference attocell, where all five APs serve this user by transmitting the same signal. One way to reduce the outage probability of the LiFi user, i.e., the probability that it is not connected to one of the APs, is through a well designed attocells radius $R_c$ and APs spacing $D_c$ that guarantee a maximum target probability of error $P_e^{\rm th}$, without any handover protocol or coordination scheme between the different APs. Fig. \ref{fig:S7} presents the BER performance of a LiFi user that is located within the reference attocell, where the field of view of the UE is $\Psi_c = 60^\circ$. Both stationary and mobile cases are considered and different values of $R_c$ and $D_c$ are evaluated. By comparing the results of this figure and those of Fig. \ref{fig:S5}, for the case when $R=1$m for example, we can see how the coexisting APs can significantly improve the BER performance of the system. In addition, we remark from Fig. \ref{fig:S7} that the choice of $\left(R_c,D_c \right)$ has also a big impact on the BER performance, for example, for the case of a stationary user, the best choice among the considered values is $\left(R_c,D_c \right) = \left(1\text{m},1.5\text{m} \right)$, whereas for the case of a mobile user, the best choice is $\left(R_c,D_c \right) = \left(1\text{m},1\text{m} \right)$. Overall, for a target probability of error $P_e^{\rm th} = 3.8 \times 10^{-3}$, we conclude that the choice $\left(R_c,D_c \right) = \left(1\text{m},1\text{m} \right)$ is the best choice that guarantees the target performance jointly for both stationary and mobile users. Obviously, the optimal $\left(R_c,D_c \right)$ depends on the geometry of the attocells and the parameters of the UE as well, such as the height of the AP $h_{\rm a}$ and the height of the UE $h_{\rm u}$. This problem will be investigated in future works.
\begin{figure}[t]
\centering     
\subfigure[Top view.]{\label{fig:S6a}\includegraphics[width=0.88\linewidth]{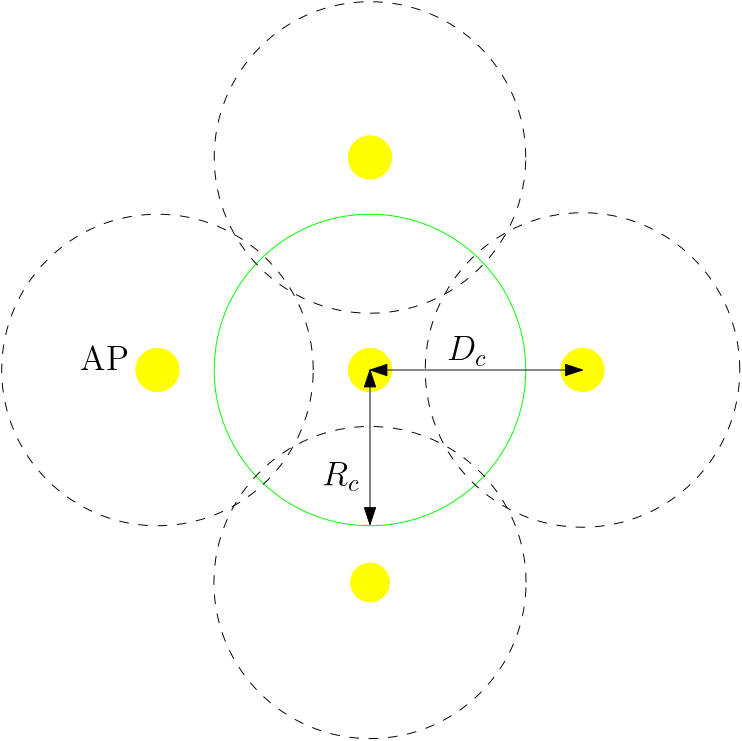}}
\subfigure[Horizontal view.]{\label{fig:S6b}\includegraphics[width=0.88\linewidth]{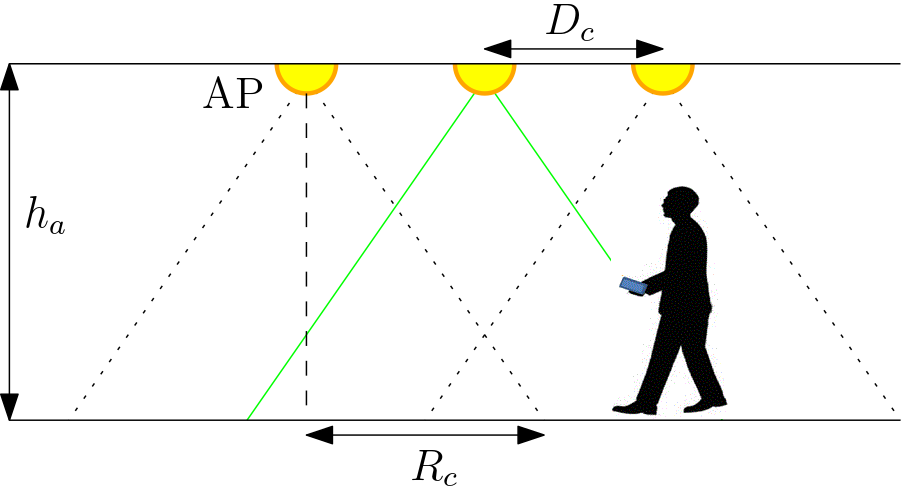}}
\caption{An indoor multi-cell LiFi system.}
\label{fig:S6}
\end{figure}
\begin{figure}[t]
\centering     
\subfigure[Stationary user.]{\label{fig:S7a}\includegraphics[width=1\linewidth]{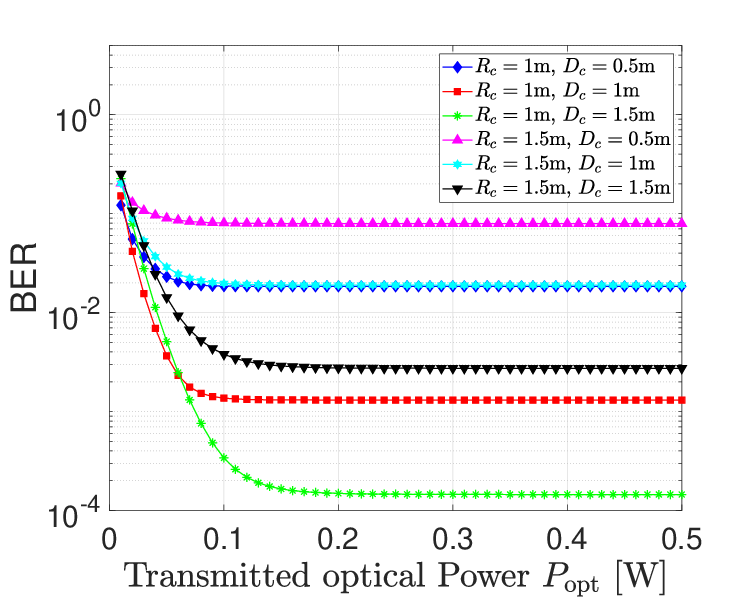}}
\subfigure[Mobile user.]{\label{fig:S7b}\includegraphics[width=1\linewidth]{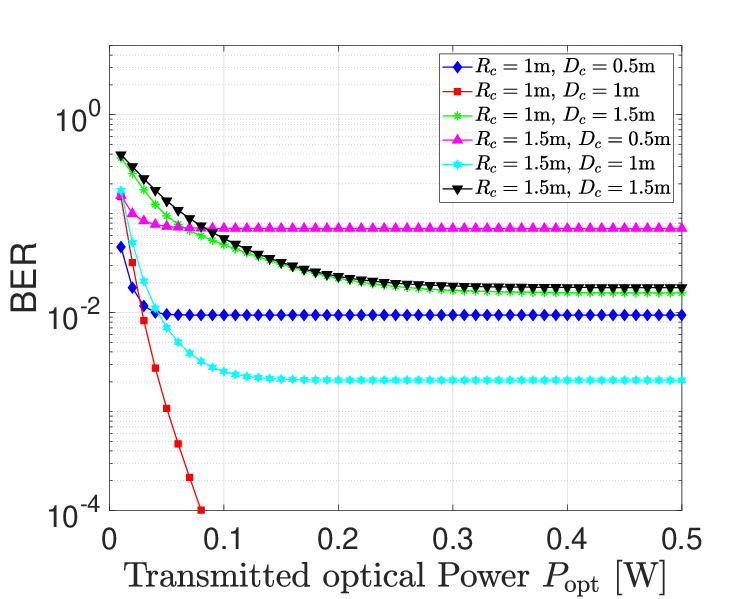}}
\caption{BER performance of OOK modulation versus the transmitted optical power for stationary and mobile LiFi users when $\Psi_c = 60^\circ$.}
\label{fig:S7}
\end{figure}
\section{Conclusions and Future Works}
In this paper, novel, realistic, and measurement-based channel models for indoor LiFi systems have been proposed. The statistics of the LOS channel gain are derived for the case of stationary and mobile LiFi users, where the LiFi receiver is assumed to be randomly oriented. For stationary LiFi users, the MTL and the MB models were proposed, whereas for the case of mobile users, the SMTG and SMB models were proposed. The accuracy of each model was evaluated using the KSD. In addition, the effect of random orientation and spatial distribution of LiFi users on the error performance of LiFi users was investigated based on the derived models. Our results showed that the random behaviour and motion of LiFi users has strong effect on the LOS channel gain. Therefore, we proposed a novel design of indoor LiFi MIMO systems in order to guarantee the required reliability performance for reliable communication links. \\ 
\indent The channel models proposed in this paper, albeit being fundamental and original, they serve as a starting point for developing realistic transmission techniques and transceiver designs tailored to real-world set-ups in an effort to bring the deployment of LiFi systems closer than ever. Thus, investigating optimal transceiver designs and cellular architectures based on the derived channel models that can meet the high demands of 5G and beyond in realistic communication environment can be considered as a future research direction. In addition, the derived channel models are intended for downlink communication in indoor LiFi environments. Therefore, deriving similar models for uplink transmission and for outdoor environments should also be considered in future work.
\section*{Acknowledgment}
M. A. Arfaoui and C. Assi acknowledge the financial support from Concordia University and FQRNT. M. D. Soltani acknowledges the School of Engineering for providing financial support. A. Ghrayeb is supported in part by Qatar National Research Fund under NPRP Grant NPRP8-052-2-029 and in part by FQRNT. H. Haas acknowledges the financial support from the Wolfson Foundation and Royal Society. He also gratefully acknowledges financial support by the Engineering and Physical Sciences Research Council (EPSRC).
\appendices 
\section{Proof of Theorem 1}
At first, let us determine the range of the LOS channel gain $H$. Recall that $H$ is expressed as 
\begin{equation}
    H = H_0  \frac{\left(h_a-h_u \right)^{m}\cos(\psi)}{d^{m+2}} \times \mathbb{1}\left(\cos(\psi)>\cos(\Psi_c) \right),
\end{equation}
where $
\cos(\psi) = \frac{r\cos(\Omega-\alpha) \sin(\theta) + \left(h_{\rm a}-h_{\rm u}\right)\cos(\theta)}{d}$. Since $\Psi_c \in [0,\frac{\pi}{2}]$, we have $H \geq 0$ and the minimum value of $H$ is equal to $h_{\min}=0$. A set of values that can yield in $h_{\min}=0$ is given as $d = d_{\min}$, $\Omega -\alpha = \pm k\frac{\pi}{2}$ s.t. $k=1,3$ and $\theta \geq \Psi_c$, which corresponds to the case where $\cos \left(\psi \right) \leq \cos \left( \Psi_c \right)$. On the other hand, the maximum value of $H$ is equal to $h_{\max}= \frac{H_0}{\left(h_a-h_u \right)^{2}}$. A set of values that can yield in $h_{\max}$ is expressed as $d = h_{\rm a}-h_{\rm u}$, $\Omega - \alpha= \pm \frac{\pi}{2}$ s.t. $k=1,3$ and $\theta = 0$, which corresponds to the case where $\cos \left(\psi \right) = 0$. On the other hand, the CDF of the LOS channel gain is expressed as shown in equation (40) on top of next page,
\begin{figure*}[t] 
\begin{subequations}
\begin{align}
F_{H} (h) &= {\rm Pr} \left(H \leq h \right) \\ 
&= {\rm Pr} \left(\left(a(\theta)\frac{\sqrt{d^2-(h_a-h_u)^2}}{d^{m+3}} \cos (\Omega-\alpha) + \frac{b(\theta)}{d^{m+3}}\right) \times \mathbb{1}\left(\cos(\psi)>\cos(\Psi_c) \right) \leq h \right) \\ 
&= {\rm Pr} \left(a(\theta)\frac{\sqrt{d^2-(h_a-h_u)^2}}{d^{m+3}} \cos (\Omega-\alpha) + \frac{b(\theta)}{d^{m+3}} \leq h, 0 \leq \psi \leq \Psi_c \right) + {\rm Pr} \left(0 \leq h, \Psi_c \leq \psi \right) \\ 
&= {\rm Pr} \left(a(\theta)\frac{\sqrt{d^2-(h_a-h_u)^2}}{d^{m+3}} \cos (\Omega-\alpha) + \frac{b(\theta)}{d^{m+3}} \leq h, \cos(\Psi_c) \leq \cos(\psi)   \right) + {\rm Pr} \left(0 \leq h, \cos(\psi) \leq \cos(\Psi_c) \right) \\ 
&= \int_{d_{\min}}^{d_{\max}} \int_{0}^{\frac{\pi}{2}}  {\rm Pr} \left( \frac{d \cos(\Psi_c)-b^{'}(\theta)}{a^{'}(\theta)\sqrt{d^2-(h_a-h_u)^2}} \leq \cos (\Omega-\alpha) \leq \frac{d^{m+3}h-b}{a(\theta)\sqrt{d^2-(h_a-h_u)^2}} \bigg|  \theta, d  \right) f_{\theta}(\theta) f_{d}(d) \mathrm{d}\theta \mathrm{d}d \\ 
&\quad+ F_{\cos(\psi)}(\cos(\Psi_c)) \mathcal{U}_{[0,\infty]}(h) \nonumber  \\ 
&= \int_{d_{\min}^*(h)}^{d_{\max}} \int_{0}^{\frac{\pi}{2}} I_{H} \left(\theta,d \right) f_{\theta}(\theta) f_{d}(d) \mathrm{d}\theta \mathrm{d}d + F_{\cos(\psi)}(\cos(\Psi_c)) \mathcal{U}_{[0,\infty]}(h) ,
\end{align}
\end{subequations}
\noindent\makebox[\linewidth]{\rule{\textwidth}{0.4pt}}
\end{figure*} 
where $a^{'}(\theta) = \sin (\theta)$, $b^{'}(\theta) = (h_a-h_u) \cos (\theta)$ and the function $I_{H}$ is expressed as
\begin{equation}
\begin{split}
   I_{H} \left(\theta,d \right) &= F_{\cos (\Omega-\alpha)} \left( \frac{d^{m+3}h-b(\theta)}{a(\theta)\sqrt{d^2-(h_a-h_u)^2}} \right) \\ 
   &- F_{\cos (\Omega-\alpha)} \left(\frac{d \cos(\Psi_c)-b^{'}(\theta)}{a^{'}(\theta)\sqrt{d^2-(h_a-h_u)^2}} \right).  
\end{split}
\end{equation}
Equality (40g) follows from the fact that the conditional probability under the integral in (40f) is not null if and only if 
\begin{equation}
\frac{d \cos(\Psi_c)-b^{'}(\theta)}{a^{'}(\theta)\sqrt{d^2-(h_a-h_u)^2}} \leq \frac{d^{m+3}h-b(\theta)}{a(\theta)\sqrt{d^2-(h_a-h_u)^2}}, 
\end{equation}
i.e., 
\begin{equation}
    \left(\frac{H_0(h_a-h_u)^{m}\cos_{\Psi_c}}{h}\right)^{\frac{1}{m+2}} \leq d.
\end{equation}
Or, $d$ is constrained within the range $[d_{\min},d_{\max}]$. Therefore, the conditional probability in (40e) is not null if and only if $d \in \left[d_{\min}^*(h), d_{\max}\right]$, where $d_{\min}^*(h) = \max \left(d_{0}(h),d_{\min} \right)$ such that $d_{0}(h) = \left(\frac{H_0(h_a-h_u)^{m}\cos(\Psi_c)}{h}\right)^{\frac{1}{m+2}}$. Additionally, since $d_0$ should be always lower than $d_{\max}$, we conclude that the channel gain $h$ under the integral in (50h) should satisfy $h_{\min}^{*} \leq h$, where 
\begin{equation}
    h_{\min}^{*} = \frac{H_0(h_a-h_u)^{m}\cos(\Psi_c)}{d_{\max}^{m+2}} \in \left[h_{\min}, h_{\max} \right].
\end{equation}
Furthermore, $F_{\cos(\psi)}(\cos(\Psi_c))$ in (40) is given as shown in (41) at the top of the next page. 
\begin{figure*}[t] 
\begin{subequations}
\begin{align}
F_{\cos(\psi)}(\cos(\Psi_c)) &= {\rm Pr} \left( \cos(\psi) \leq \cos(\Psi_c) \right) \\
&= \int_{d_{\min}}^{d_{\max}} \int_{0}^{\frac{\pi}{2}} {\rm Pr} \left( \cos(\psi) \leq \cos(\Psi_c) \bigg| \theta, d \right)  f_{\theta}(\theta) f_{d}(d) \mathrm{d}\theta \mathrm{d}d \\
&= \int_{d_{\min}}^{d_{\max}} \int_{0}^{\frac{\pi}{2}} {\rm Pr} \left( \cos (\Omega-\alpha) \leq \frac{d \cos(\Psi_c)-b^{'}(\theta)}{a^{'}(\theta)\sqrt{d^2-(h_a-h_u)^2}} \bigg| \theta, d \right) f_{\theta}(\theta) f_{d}(d) \mathrm{d}\theta \mathrm{d}d \\
&=\int_{d_{\min}}^{d_{\max}} \int_{0}^{\frac{\pi}{2}} F_{\cos (\Omega-\alpha)} \left(\frac{d \cos(\Psi_c)-(h_a-h_u) \cos (\theta)}{\sin (\theta) \sqrt{{d}^2-(h_a-h_u)^2}} \right) f_{\theta}(\theta) f_{d}(d) \mathrm{d}\theta \mathrm{d}d.
\end{align}
\end{subequations}
\noindent\makebox[\linewidth]{\rule{\textwidth}{0.4pt}}
\end{figure*} 
\indent Based on this, the corresponding PDF of the LOS channel gain $H$ is obtained by differentiating equality (40g) with respect to $h$. Using Leibniz integral rule for differentiation, the PDF of the LOS channel gain $H$ is expressed as shown in (46) on the top of next page. 
\begin{figure*}[t] 
\begin{subequations}
\begin{align}
f_{H} (h) &= \frac{\partial}{\partial h} \left[ {\rm Pr} \left(H \leq h \right) \right]  \\ 
&= \frac{\partial}{\partial h} \left[ \int_{d_{\min}^*}^{d_{\max}} \int_{0}^{\frac{\pi}{2}} I_{H} \left(\theta,d \right) f_{\theta}(\theta) f_{d}(d) \mathrm{d}\theta \mathrm{d}d \right] + \frac{\partial}{\partial h} \left[ F_{\cos(\psi)}(\cos(\Psi_c)) \mathcal{U}_{[0,\infty]}(h)  \right] \\ 
&=  \int_{d_{\min}^*}^{d_{\max}} \int_{0}^{\frac{\pi}{2}} \frac{d^{m+3}}{a(\theta)\sqrt{d^2-(h_a-h_u)^2}} f_{\cos (\Omega-\alpha)} \left( \frac{d^{m+3}h-b(\theta)}{a(\theta)\sqrt{d^2-(h_a-h_u)^2}} \right) f_{\theta}(\theta) f_{d}(d) \mathrm{d}\theta \mathrm{d}d  \\ 
&\quad + v(h) \int_{0}^{\frac{\pi}{2}} J_{H} \left(\theta,d \right) f_{\theta}(\theta) \mathrm{d}\theta +F_{\cos(\psi)}(\cos(\Psi_c)) \delta(h),
\end{align}
\end{subequations}
\noindent\makebox[\linewidth]{\rule{\textwidth}{0.4pt}}
\end{figure*} 
for $h \in \left[ h_{\min}^{*}, h_{\max}^{*} \right]$, and $0$ otherwise, such that $h_{\max}^{*} = h_{\max} \cos(\Psi_c) \in \left[h_{\min}, h_{\max} \right]$. This completes the proof.
\section{Proof of Theorem 2}
Based on the results of Theorem 1, the PDF of the LOS channel gain $H$ is given by 
\begin{equation}
\begin{split}
&f_{H} (h) \\
&= g_{H}(h) \mathcal{U} \left( h_{\min}^{*},h_{\max} \right) + F_{\cos(\psi)}(\cos(\Psi_c)) \delta(h) \\
&= \left(1-F_{\cos(\psi)}(\cos(\Psi_c))\right) f_Z(h) + F_{\cos(\psi)}(\cos(\Psi_c)) \delta(h),
\end{split}
\end{equation}
where $f_Z$ is a PDF, with support range $[h_{\min}^{*},h_{\max}]$. The PDF $f_Z$ associated to a random variable $Z$ that is expressed as $Z = XY$, where $X = \frac{c}{d^{m+2}}$ such that $c = H_0\left(h_{\rm a} - h_{\rm u} \right)^{m}$ and $Y = \cos \left(\psi \right)$ for $\psi \in \left[0,\Psi_c \right]$. Note that $X$ and $Y$ are two random variables that reflect the effects of the random spatial distribution of the LiFi user and the random orientation of the UE on the LOS channel gain, respectively. For the case of stationary users, and using the PDF transformation of random variables, the PDF of the random variable $X$ is expressed as 
\begin{equation}
\begin{split}
f_X(x) &= \frac{c^{\frac{1}{m+2}}}{(m+2)} \left(\frac{1}{x}\right)^{\frac{m+3}{m+2}} f_d \left(\left(\frac{c}{x} \right)^{\frac{1}{m+2}} \right) \\ 
&= \frac{2c^{\frac{2}{m+2}}}{R^2(m+2)} \left(\frac{1}{x}\right)^{\frac{m+4}{m+2}} \mathcal{U}_{\left[ c/{d_{\max}^{m+2}},c/{d_{\min}^{m+2}} \right]}(x).
\end{split}
\end{equation}
Obviously, $X$ and $Y$ are correlated since they are a function of the distance $d$, which is also a random variable. However, such correlation can be weak in most of the cases. In fact, for high values of $d$, the effect of random orientation on the LOS channel gain $H$ is negligible compared to the one of the distance, whereas for the case of low values of $d$, the effect of distance is negligible compared to the one of random orientation. Due to this, as an approximation, we assume that the random variables $X$ and $Y$ are uncorrelated. Based on this, using the theorem of the PDF of the product of random variables \cite{glen2004computing}, the PDF of the random variable $Z$ can be approximated as
\begin{subequations}
\begin{align}
f_Z(h) &\approx \int_{y_{\min}(h)}^{y_{\max}(h)} f_X \left( \frac{h}{y} \right) f_Y \left(y \right) \frac{\mathrm{d}y}{y}  \\ 
&= \int_{y_{\min}(h)}^{y_{\max}(h)} \frac{2c^{\frac{2}{m+2}}}{R^2(m+2)} \left(\frac{y}{h}\right)^{\frac{m+4}{m+2}}  f_Y \left(y \right) \frac{\mathrm{d}y}{y} \\
&= \left(\frac{1}{h}\right)^{\frac{m+4}{m+2}} \int_{y_{\min}(h)}^{y_{\max}(h)} \frac{2c^{\frac{2}{m+2}}}{R^2(m+2)} y^{\frac{2}{m+2}}   f_Y \left(y \right) \mathrm{d}y,
\end{align}
\end{subequations}
where $f_{Y}$ denotes the PDF of the random variable $Y$ and it is given in (19) and (22) of \cite{soltani2018modeling}. Based on this, the PDF $f_Z$ has the form 
\begin{equation}
    f_Z(h) \approx \frac{1}{h^{\nu}}  \Tilde{f} (h), 
\end{equation}
where $\nu > 0$ and $\Tilde{f}$ is a function with support range $[h_{\min}^{*},h_{\max}]$ that is expressed as
\begin{equation}
    \Tilde{f}(h) = \int_{y_{\min}(h)}^{y_{\max}(h)} \frac{2c^{\frac{2}{m+2}}}{R^2(m+2)} y^{\frac{2}{m+2}}   f_Y \left(y \right) \mathrm{d}y.
\end{equation}
Consequently, by substituting $f_Z(h)$ in (47) by its expression and defining the function $g$, for $h \in [h_{\min}^{*},h_{\max}]$, as 
\begin{equation}
    g(h) = \left(1-F_{\cos(\psi)}(\cos(\Psi_c))\right) \Tilde{f}(h),
\end{equation}
we obtain the result of Theorem 2, which completes the proof.
\section{Proof of Theorem 3}
Using the same notation adopted in Appendix B, the PDF of the random variable $X$ for the case of mobile users is expressed as 
\begin{equation}
\begin{split}
f_X(x) &= \frac{c^{\frac{1}{m+2}}}{(m+2)} \left(\frac{1}{x}\right)^{\frac{m+3}{m+2}} f_d \left(\left(\frac{c}{x} \right)^{\frac{1}{m+2}} \right) \\ 
&= \sum_{i=1}^3 \frac{a_i c^{b_i + \frac{2}{m+2}}}{(m+2)R^{b_i+1}} \left(\frac{1}{x}\right)^{b_i + \frac{m+4}{m+2}}\mathcal{U}_{\left[ c/{d_{\max}^{m+2}},c/{d_{\min}^{m+2}} \right]}(x).
\end{split}
\end{equation}
Therefore, from (49b), the PDF of the random variable $Z$ can be approximated by
\begin{equation}
\begin{split}
f_Z(h) &\approx \int_{y_{\min}(h)}^{y_{\max}(h)} f_X \left( \frac{h}{y} \right) f_Y \left(y \right) \frac{\mathrm{d}y}{y}  \\ 
&= \sum_{i=1}^3 \left(\frac{1}{h}\right)^{b_i+\frac{m+4}{m+2}} \int_{y_{\min}(h)}^{y_{\max}(h)} \left[ \frac{a_i c^{b_i + \frac{2}{m+2}}}{(m+2)R^{b_i+1}} \right. \\ 
&\qquad \qquad \qquad \qquad \qquad \qquad \left. \times y^{b_i+\frac{m+4}{m+2}} f_Y \left(y \right) \right] \mathrm{d}y,
\end{split}
\end{equation}
which has the form 
\begin{equation}
    f_Z(h) \approx \sum_{j=1}^3 \frac{1}{z^{\nu_i}}  \Tilde{f}_{j} (h), 
\end{equation}
where, for $j=1,2,3$, $\nu_j > 0$ and $\Tilde{f}_{j}$ is a function with support range $[h_{\min}^{*},h_{\max}]$ that is expressed as
\begin{equation}
    \Tilde{f}_{j}(h) = \int_{y_{\min}(h)}^{y_{\max}(h)} \left[ \frac{a_i c^{b_i + \frac{2}{m+2}}}{(m+2)R^{b_i+1}} y^{b_i+\frac{m+4}{m+2}} f_Y \left(y \right) \right] \mathrm{d}y.
\end{equation}
Consequently, by substituting $f_Z(h)$ in (47) by its expression and defining $g_{j}(h)$, for $j=1,2,3$ and for $h \in [h_{\min}^{*},h_{\max}]$, as 
\begin{equation}
    g_{j}(h) = \left(1-F_{\cos(\psi)}(\cos(\Psi_c))\right) \Tilde{f}_{j} (h),
\end{equation}
we obtain the result of Theorem 3, which completes the proof.
\section{Proof of Corollary 1}
Based on PDF of the LOS channel gain $H$ provided in Theorem 1 and the expression of the probability of error of $M$-pulse amplitude modulation \cite{Intro13,luo2014fundamental,yeh1998approximate}, the average probability of error of the considered LiFi system is expressed as 
\begin{subequations}
\begin{align}
P_e  &= \int_{h_{\min}}^{h_{\max}} f_{H} (h) P_{e,h}\left(P_{\rm opt},h\right) \mathrm{d}h  \\ 
&= \int_{h_{\min}}^{h_{\max}} f_{H} (h) \frac{2(M-1)}{M} \mathcal{Q} \left(\frac{h\sqrt{\gamma_{\rm TX}}}{M-1} \right) \mathrm{d}h \\ 
&= \int_{h_{\min}^{*}}^{h_{\max}} g_{H} (h) \frac{2(M-1)}{M} \mathcal{Q} \left(\frac{h P_{\rm opt} }{\sigma \left(M-1\right)} \right) \mathrm{d}h  \\ 
&+ F_{\cos(\psi)}(\cos(\Psi_c)) \int_{h_{\min}}^{h_{\max}} \delta(h) \mathcal{Q} \left(\frac{h P_{\rm opt} }{\sigma \left(M-1\right)} \right) \mathrm{d}h \nonumber,
\end{align}
\end{subequations}
where $P_{e,h}\left(P_{\rm opt},h\right)$ is the instantaneous probability of error for a given channel gain $h$ and $\gamma_{\rm TX} = \frac{P_{\rm elec}}{\sigma^2} = \frac{P_{\rm opt}^2}{\sigma^2}$ is the transmitted signal to noise ratio, such that $P_{\rm elec}$ is the transmitted signal to noise ratio and $\sigma^2$ is the average noise power at the receiver. Now, since the function $h \mapsto g_{H} (h) \frac{2(M-1)}{M} \mathcal{Q} \left(\frac{h P_{\rm opt} }{\sigma \left(M-1\right)} \right)$ is a smooth function within $\left[h_{\min}^{*}, h_{\max} \right]$, and using the Lebesgue's dominated convergence theorem, we get 
\begin{equation}
    \begin{split}
        &\lim_{P_{\rm opt} \rightarrow \infty} \int_{h_{\min}}^{h_{\max}} g_{H} (h) \frac{2(M-1)}{M} \mathcal{Q} \left(\frac{h P_{\rm opt} }{\sigma \left(M-1\right)} \right) \mathrm{d}h \\ 
        &\qquad =\int_{h_{\min}}^{h_{\max}} \lim_{P_{\rm opt} \rightarrow \infty} g_{H} (h) \frac{2(M-1)}{M} \mathcal{Q} \left(\frac{h P_{\rm opt} }{\sigma \left(M-1\right)} \right) \mathrm{d}h \\
        &\qquad = 0.
    \end{split}
\end{equation}
Furthermore, we have $\int_{h_{\min}}^{h_{\max}} \delta(h) \mathcal{Q} \left(\frac{h P_{\rm opt} }{\sigma \left(M-1\right)} \right) \mathrm{d}h = \mathcal{Q} \left( 0 \right) = \frac{1}{2}$ since $h_{\min} = 0$, which implies that $\lim_{P_{\rm opt} \rightarrow \infty} \int_{h_{\min}}^{h_{\max}} \delta(h) \mathcal{Q} \left(\frac{h P_{\rm opt} }{\sigma \left(M-1\right)} \right) \mathrm{d}h = \frac{1}{2}$. Therefore, we conclude that $\lim_{P_{\rm opt} \rightarrow \infty}  P_e \left(P_{\rm opt} \right) = \frac{F_{\cos(\psi)}(\cos(\Psi_c))}{2}$, which completes the proof.
\bibliographystyle{IEEEtran}
\bibliography{journalbiblio}
\end{document}